\newcommand{\com}[2]{
  \lb\begin{array}{cc}
    #1 \\ #2
  \end{array}\rb
}
\newcommand{\rb}{\right)}
\newcommand{\lb}{\left(}
\newcommand{\E}{\mathbb{E}}
\newcommand{\B}{\mathcal{B}}
\newcommand{\cP}{\mathcal{P}}
\newcommand{\cPb}{\mathcal{P}_{B}}
\newcommand{\F}{\mathbf{F}}
\renewcommand{\L}{\mathcal{L}}
\newcommand{\class}{\mathbf{U}}
\newcommand{\bclass}{\mathbf{U}^0}
\newcommand{\R}{\mathbb{R}}
\newcommand{\estclass}{\mathbf{H}}
\newcommand{\eps}{\epsilon}
\newcommand{\zero}{\textbf{0}}
\newcommand{\td}[1]{\tilde{#1}}
\newcommand{\Var}{\mathrm{Var}}
\newcommand{\Cov}{\mathrm{Cov}}
\newcommand{\tr}{\mathrm{tr}}
\renewcommand{\vec}{\mathrm{vec}}
\newcommand{\OLS}{\mathrm{OLS}}
\newcommand{\G}{\mathbf{G}}
\newcommand{\cont}{\mathrm{cont}}
\newcommand{\disc}{\mathrm{disc}}
\newcommand{\lin}{\mathrm{lin}}
\newcommand{\inde}{\mathrm{indep}}
\newcommand{\GLS}{\mathrm{GLS}}
\newcommand{\LUE}{\mathrm{LUE}}
\newcommand{\LPQ}{\mathrm{LPQ}}
\newcommand{\koopmann}{\mathrm{Km}}
\renewcommand{\sp}{\mathrm{span}}
\newcommand{\iid}{\mathrm{iid}}
\newcommand{\cl}{\mathrm{cl}}
\newcommand{\sym}{\mathrm{sym}}
\newcommand{\rank}{\mathrm{rank}}
\newtheorem{theorem}{Theorem}[section]
\newtheorem{lemma}{Lemma}[section]
\newtheorem{proposition}{Proposition}[section]
\newtheorem{definition}{Definition}[section]
\title{What Estimators Are Unbiased For Linear Models?}
\author[1]{Lihua Lei \thanks{lihualei@stanford.edu}}
\author[2]{Jeffrey Wooldridge \thanks{wooldri1@msu.edu}}
\affil[1]{Graduate School of Business, Stanford University}
\affil[2]{Department of Economics, Michigan State University}
\begin{document}

\maketitle 

\begin{abstract}
  The recent thought-provoking paper by \citet[Econometrica]{hansen2022modern} proved that the Gauss-Markov theorem continues to hold without the requirement that competing estimators are linear in the vector of outcomes. Despite the elegant proof, it was shown by the authors and other researchers that the main result in the earlier version of Hansen's paper does not extend the classic Gauss-Markov theorem because no nonlinear unbiased estimator exists under his conditions. To address the issue, \cite{hansen2022modern} added statements in the latest version with new conditions under which nonlinear unbiased estimators exist. 

  Motivated by the lively discussion, we study a fundamental problem: what estimators are unbiased for a given class of linear models? We first review a line of highly relevant work dating back to the 1960s, which, unfortunately, have not drawn enough attention. Then, we introduce notation that allows us to restate and unify results from earlier work and \cite{hansen2022modern}. The new framework also allows us to highlight differences among previous conclusions. Lastly, we establish new representation theorems for unbiased estimators under different restrictions on the linear model, allowing the coefficients and covariance matrix to take only a finite number of values, the higher moments of the estimator and the dependent variable to exist, and the error distribution to be discrete, absolutely continuous, or dominated by another probability measure. Our results substantially generalize the claims of parallel commentaries on \cite{hansen2022modern} and a remarkable result by \cite{koopmann1982parameterschatzung}.
\end{abstract}


\section{Introduction}

The celebrated Gauss-Markov theorem plays a central role in econometrics theory for fixed-design linear models. It states that the ordinary least squares (OLS) estimator is the best linear unbiased estimator (BLUE) when the variance-covariance has a scalar form. The result was later generalized by \cite{aitkin1935least}, who proved that the generalized least squares (GLS) estimator is BLUE under a linear model with the covariance matrix known up to a multiplicative constant. Notably, both efficiency results are proved within the class of linear estimators, raising the possibility that there might exist unbiased estimators nonlinear in the response variable that outperform the OLS/GLS estimators.

To the best of our knowledge, the problem of finding unbiased, nonlinear estimators with smaller variances was first studied by Theodore W. Anderson in the early 1960s \citep{anderson1962least}, though the work seems to have received little attention. Anderson found that the OLS estimator does not have the minimal variance among all estimators that are unbiased under homoskedastic linear models with independent errors and an essentially unrestricted design matrix. In particular, he pointed out that linear-plus-quadratic (LPQ) estimators can outperform the OLS estimator for certain error distributions.

Very recently, \cite{hansen2022modern} proved that, among all estimators that are unbiased under any linear model with a finite covariance matrix, the OLS estimator achieves the minimal variance under a scalar covariance matrix. In fact, Hansen proved the more general result that the GLS estimator achieves the minimal variance when the covariance matrix is proportional to the weight matrix (Theorem 4 therein). In this sense, the OLS/GLS estimator is the best unbiased estimator (BUE), a stronger property than BLUE. Hansen's proof relies on a clever combination of the distribution-tilting technique, which is often applied in semiparametric statistics to study asymptotic efficiency \citep[e.g.][]{bickel1993efficient}, and the Cram\'{e}r-Rao lower bound, which is used to study finite-sample efficiency (among unbiased estimators). These two results are not contradictory because \cite{anderson1962least} examines a larger class of candidate estimators which are only required to be unbiased under a scalar covariance matrix while \cite{hansen2022modern} restricts the candidates to be unbiased without constraints on the covariance matrix.

The broader model class considered by \cite{hansen2022modern} only restricts the first moment of the dependent variable, casting doubt on existence of nonlinear unbiased estimators under such weak restrictions. In fact, since the acceptance of \cite{hansen2022modern}, there has been a lively online discussion on this question led by the current authors\footnote{See e.g., \url{https://twitter.com/jmwooldridge/status/1492990971218440197?s=21} and

  \url{https://twitter.com/lihua_lei_stat/status/1493291015129550849?s=21}} and investigated by other researchers \citep{potscher2022modern, portnoy2022linear}. They unanimously conclude, with different proof strategies, that only linear estimators are unbiased under every linear model with finite covariance matrix. This, unfortunately, implies that BUE and BLUE are equivalent and hence Hansen's new results are restatements of the classical Gauss-Markov theorem.

In an attempt to address the issue, \cite{hansen2022modern} added new results in the latest version (Theorems 5 and 6). The new results imply that, among all estimators that are unbiased under linear models with independent (rather than uncorrelated) errors, the OLS estimator is BUE under homoskedasticity and independence. This class of unbiased estimators surely includes nonlinear ones. Apparently, this is not the only way to enrich the set of unbiased estimators. In the 1980s, \cite{koopmann1982parameterschatzung} studied the class of linear models with a covariance matrix known up to a multiplicative constant and proved that all unbiased estimators under this class are LPQ estimators with certain restrictions on the coefficients. Building upon this remarkable representation theorem, \cite{gnot1992nonlinear} proved that the OLS/GLS estimator is no longer BUE among these LPQ estimators under the same class of linear models. Again, this result does not contradict \citet[Theorems 5 and 6]{hansen2022modern} because they consider non-nested classes of unbiased estimators and different model classes for which the BUE is proved.

The differences among the above results are subtle --- any statement that the OLS/GLS estimator is or is not BUE hinges on (a) the set of competing estimators and (b) the class of linear models under examination. To address the nuances, we introduce a framework that allows restatement of all existing results \citep{anderson1962least, gnot1992nonlinear, hansen2022modern} without ambiguity. This can help clear up the confusion that appears to be prevalent in the community and pave the way for further technical discussion. 

After that, unlike the other commentaries, we will mainly focus on the characterization of unbiased estimators under different model classes, instead of conditions under which the OLS/GLS estimator is (not) BUE. Our main contributions are new representation theorems for unbiased estimators under various classes of linear models. Our results (Theorem \ref{thm:GLS_master} and Theorem \ref{thm:koopmann_master}) substantially generalize \cite{potscher2022modern}, \cite{portnoy2022linear}, and \cite{koopmann1982parameterschatzung} by allowing the coefficients and covariance matrix to take only a finite number of values, the higher moments of the estimator and the dependent variable to exist, and the error distribution to be discrete, absolutely continuous, or dominated by another probability measure. They are based on (a slightly stronger version of) a generic result in \cite{ruschendorf1987unbiased}, which unifies and generalizes a line of work on unbiasedness and completeness \citep[e.g.][]{halmos1946theory, fraser1954completeness, hoeffding1977more, fisher1982unbiased, koopmann1982parameterschatzung}. Roughly speaking, it states that the class of unbiased estimators of zero under a class of distributions defined by a set of moment conditions are linear combinations of these moments. We believe the result can be applied to study other model classes and provide a user-friendly proof in Appendix \ref{subapp:proof_master} for readers who are less familiar with functional analysis in abstract Banach spaces. Finally, we apply the representation theorem to detect a new case where the OLS/GLS estimator is BUE. It is analogous to Theorem 5 of \cite{hansen2022modern} but closer to the spirit of classical Gauss-Markov theorem that only constrains the moments. 


\section{Notation and literature review without ambiguity}

\subsection{Families of linear models}
We start by formally defining several families of linear models that have been investigated in the literature and will be studied in this paper. We require an intricate notation in order to address the nuances that have caused considerable confusion. 

Let $X\in \R^{n\times k}$ denote the design matrix and $Y\in \R^{n}$ denote the vector of the dependent variable. Throughout the paper we treat $X$ as fixed and assume the measurable space of $Y$ is equipped with the standard Borel $\sigma$-algebra on $R^{n}$. For any $r\ge 1$ and coefficient vector $\beta\in \R^{k}$, we denote by $\F_r(X; \beta)$ the class of distributions of $Y\in \R^{n}$ with mean $X\beta$ and finite $r$-th moment,
\begin{equation}
  \label{eq:F_X_beta}
  \F_r(X; \beta) = \{F: \E_{F}[Y] = X\beta, \,\,\E_{F}[\|Y\|^r] < \infty\},
\end{equation}
where $\|\cdot\|$ denotes the Euclidean norm in $\R^{n}$. The second condition $\E_{F}[\|Y\|^r] < \infty$ becomes redundant when $r = 1$. The standard (fixed-design) linear model can be characterized as
\begin{equation}
  \label{eq:F_X}
  \F_r(X) = \bigcup_{\beta\in \R^{k}}\F_r(X; \beta),
\end{equation}
often with $r = 2$ and sometimes with $r = 1$ \citep[e.g.][]{jensen1979linear}. It is clear that $\F_s(X)\subset \F_r(X)$ if $s \ge r$. 

In the econometrics literature, the covariance matrix is often assumed to be known or estimable. We further define the stratum $\F_r(X; \beta, \Lambda)$ of $\F_r(X; \beta)$ by fixing the covariance matrix $\Lambda$, i.e.,
\begin{equation}
  \label{eq:F_X_beta_Lambda}
  \F_r(X; \beta, \Lambda) = \{F: \E_{F}[Y] = X\beta, \,\,\Cov_{F}[Y] = \Lambda, \,\, \E_{F}[\|Y\|^r] < \infty\}.
\end{equation}
Note that $\F_r(X; \beta, \Lambda) = \F_2(X; \beta, \Lambda)$ for any $r \in [1, 2]$ and, for any $r\ge 2$,
\[\bigcup_{\Lambda\succeq \zero_{k\times k}}\F_r(X; \beta, \Lambda) = \F_r(X; \beta),\]
where $\zero_{k\times k}$ denotes the $k\times k$ zero matrix and $A \succeq B$ (resp. $\succ$) iff $A - B$ is positive semidefinite (resp. positive definite). Using this notation, the linear model considered in Aitken's theorem can be expressed as $\F_2^\Sigma(X)$ where
\begin{equation}
  \label{eq:F_Sigma}
  \F_r^\Sigma(X) \triangleq \bigcup_{\beta\in \R^{k}}\bigcup_{\sigma^2 > 0}\F_r(X; \beta, \sigma^2 \Sigma), \quad r\ge 2.
\end{equation}
In particular, the sets $\F_2^0$ and $\F_2$ defined in \cite{hansen2022modern} are equivalent to $\F_2^{I_{n}}(X)$, where $I_{n}$ denotes the $n\times n$ identity matrix, and $\bigcup_{\Sigma\succeq \zero_{k\times k}}\F_2^\Sigma(X) = \F_2(X)$ in our notation; see Remark 2.1 of \cite{potscher2022modern} for a similar clarification. 

All above classes only restrict the moments of $Y$. Sometimes one is willing to impose further independence assumptions on the error terms $Y - X\beta$. Specifically, let
\begin{equation}
  \label{eq:F_r_ind_X_beta_Sigma}
  F_{r, \inde}(X; \beta, \Sigma) \triangleq F_{r}(X; \beta, \Sigma) \cap \left\{F: \Sigma^{-1/2}(Y - X\beta)\text{ has independent entries}\right\}.
\end{equation}
Similarly, we can further restrict the errors to be identically distributed:
\begin{equation}
  \label{eq:F_r_iid_X_beta_Sigma}
  F_{r, \iid}(X; \beta, \Sigma) \triangleq F_{r}(X; \beta, \Sigma) \cap \left\{F: \Sigma^{-1/2}(Y - X\beta)\text{ has i.i.d. entries}\right\},
\end{equation}
Analogous to \eqref{eq:F_X} and \eqref{eq:F_Sigma}, we can define
\begin{equation}
  \label{eq:F_r_ind}
  F_{r, \inde}^{\Sigma}(X) \triangleq \bigcup_{\beta\in \R^{k}}\bigcup_{\sigma^2 > 0}\F_{r, \inde}(X; \beta, \sigma^2 \Sigma), \quad  F_{r, \iid}^{\Sigma}(X) \triangleq \bigcup_{\beta\in \R^{k}}\bigcup_{\sigma^2 > 0}\F_{r, \iid}(X; \beta, \sigma^2 \Sigma).
\end{equation}
Note that $\F_{2, \inde}^{I_{n}}(X)$ (resp. $\F_{2, \iid}^{I_{n}}(X)$) gives all linear models with homoskedastic and independent errors (resp. with i.i.d. errors). All have been previously studied in the literature \citep[e.g.,][]{anderson1962least, gnot1992nonlinear, hansen2022modern} and will be examined in later sections. 

Sometimes we further restrict the above model classes by intersecting them with a dominating class:
\begin{equation}
  \label{eq:cP_mu}
  \cP(\mu) = \{F: F \text{ is absolutely continuous with respect to } \mu\},
\end{equation}
or a subset of $\cP(\mu)$ that only includes distributions with bounded Radon-Nikodym derivatives:
\begin{equation}
  \label{eq:cPb_mu}
  \cPb(\mu) = \{F\in \cP(\mu): dF / d\mu \text{ is almost surely bounded under }\mu\}.  
\end{equation}
Note that the set of absolutely continuous distributions in $\R^{n}$, denoted by $\cP_{\cont}$, is given by $\cP(\mu)$ with $\mu$ chosen as the Lebesgue measure\footnote{$\cP(\mu)$ would include mixed distributions if $\mu$ is mixed.}. Furthermore, as with \cite{potscher2022modern} and \cite{portnoy2022linear}, we also consider the class of discrete distributions\footnote{All results in the paper continue to hold if we redefine $\cP_{\disc}$ to include discrete distributions with a countable support.}:
\begin{equation}
  \label{eq:cP_disc}
  \cP_{\disc} = \{F: F \text{ assigns probability }1\text{ on a finite subset of }\R^{n}\}.
\end{equation}

\subsection{Unbiased estimators for fixed-design linear models}\label{subsec:unbiased_preliminary}
Unbiased estimators exist only if $\beta$ is identifiable. A sufficient and necessary condition, which will be assumed throughout, is
\[\mathrm{rank}(X) = k \le n,\]
in which case the coefficient $\beta$ can be identified as
\[\beta(F) = (X' X)^{-1}X'\E_{F}[Y].\]
where $'$ denotes the transpose. For notational convenience, we will simply write $\beta(F)$ as $\beta$, though it should be kept in mind that $\beta$ is a functional of the distribution of $Y$.

For any $r\ge 1$ and family of linear models $\F(X)$ with a full-rank $X$, denote by $\class_{r}(\F(X))$ as the class of estimators (i.e., measurable functions of $Y$) that are unbiased for $\beta$ with finite $r$-th moments under every distribution in $\F(X)$, i.e.,
\begin{equation}
  \label{eq:UFX}
  \class_{r}(\F(X)) = \{u: \R^{n}\mapsto \R^{k}: \E_{F}[u(Y)] = \beta, \,\, u\in \L^{r}(F) \text{ for all }F\in \F(X)\},
\end{equation}
where $\L^{r}(F)$ denotes the set of measurable functions that have finite $r$-th moments under $F$. The recent work \citep{hansen2022modern, potscher2022modern, portnoy2022linear} studied the special case $r = 1$. Clearly, $\class_r(\cdot)$ satisfies the following monotonicity property, which will be repeatly invoked.
\begin{proposition}\label{prop:monotonicity}
 For any $r\ge 1$, $\class_{r}(\F(X))\subset \class_{r}(\td{\F}(X))$ if $\td{\F}(X) \subset \F(X)$.
\end{proposition}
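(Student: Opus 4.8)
The plan is to prove the inclusion directly from the definition in \eqref{eq:UFX}, exploiting the fact that the defining conditions for membership in $\class_r(\cdot)$ are imposed separately at each distribution $F$ of the family. First I would fix an arbitrary estimator $u \in \class_{r}(\F(X))$ and spell out what this membership means: for every $F \in \F(X)$, the unbiasedness identity $\E_{F}[u(Y)] = \beta$ holds and $u \in \L^{r}(F)$. Here $\beta$ should be read as the functional $\beta(F) = (X'X)^{-1}X'\E_{F}[Y]$, so both requirements are genuinely per-distribution statements rather than a single global constraint.

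Next I would invoke the hypothesis $\td{\F}(X) \subset \F(X)$. Because every $F \in \td{\F}(X)$ is in particular an element of $\F(X)$, the two properties just recorded --- unbiasedness and the finite $r$-th moment --- automatically hold for every $F \in \td{\F}(X)$ as well. Consequently $u$ meets all the defining requirements of $\class_{r}(\td{\F}(X))$, so $u \in \class_{r}(\td{\F}(X))$. Since $u$ was arbitrary, this yields $\class_{r}(\F(X)) \subset \class_{r}(\td{\F}(X))$, as claimed.

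The whole content is that enlarging the collection of distributions over which unbiasedness and the moment bound must hold can only shrink --- never enlarge --- the set of estimators satisfying the requirement, so the direction of the inclusion flips relative to the inclusion of families; this reversal is the only point worth stating with care. I do not anticipate any genuine analytic obstacle: the statement is immediate from the universally quantified form of the definition, and I would expect to present it in a couple of lines. The single subtlety to keep visible throughout is that $\beta$ abbreviates the distribution-dependent functional $\beta(F)$, so that the unbiasedness condition transfers correctly from $\F(X)$ to its subfamily $\td{\F}(X)$.
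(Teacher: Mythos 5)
Your proof is correct and is exactly the immediate argument the paper has in mind; indeed, the paper states Proposition \ref{prop:monotonicity} without any proof precisely because the universally quantified definition in \eqref{eq:UFX} makes the inclusion reversal automatic. Your care in noting that $\beta$ means the distribution-dependent functional $\beta(F)$ is a sensible precaution but introduces no divergence from the intended reasoning.
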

In the following, we will consider the class of linear estimators
\begin{equation}
  \label{eq:estclass_lin}
  \estclass_{\lin}^{n, k} = \{u(y) = A'y: A\in \R^{n\times k}\},
\end{equation}
and the class of linear unbiased estimators (LUE):
\begin{equation}
  \label{eq:estclass_GLS}
  \estclass_{\LUE}(X) = \{u(y) = A'y: A\in \R^{n\times k}, \,\, A'X = I_{k}\}.
\end{equation}
It is well-known that all LUE estimators are unbiased.
\begin{proposition}\label{prop:estclass_GLS}
  For any $r\ge 1$, let $\F_r(X)$ be defined in \eqref{eq:F_X}. Then,
\[\estclass_{\LUE}(X)\subset \class_r(\F_r(X)).\]
\end{proposition}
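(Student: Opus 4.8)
The plan is to verify directly the two defining requirements of membership in $\class_r(\F_r(X))$ — unbiasedness for $\beta$ and finiteness of the $r$-th moment — for an arbitrary candidate $u(y) = A'y$ with $A'X = I_k$ and an arbitrary $F \in \F_r(X)$. Measurability is immediate, since the linear map $y \mapsto A'y$ is continuous and hence Borel measurable, so the substance lies in the two moment conditions.

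First I would establish that the first moment $\E_F[Y]$ is well-defined. Since $F \in \F_r(X)$ with $r \ge 1$, the definition \eqref{eq:F_X} gives $\E_F[\|Y\|^r] < \infty$, and Lyapunov's inequality yields $\E_F[\|Y\|] \le (\E_F[\|Y\|^r])^{1/r} < \infty$; thus $\E_F[Y]$ exists and equals $X\beta$ for the corresponding $\beta$. Unbiasedness is then immediate from linearity of expectation: $\E_F[A'Y] = A'\E_F[Y] = A'X\beta = \beta$, where the last equality invokes the constraint $A'X = I_k$ defining $\estclass_{\LUE}(X)$ in \eqref{eq:estclass_GLS}.

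Next I would check the moment condition $u \in \L^r(F)$. Writing $A = (a_1, \dots, a_k)$ in columns and applying the Cauchy--Schwarz inequality component-wise, $\|A'y\|^2 = \sum_{j=1}^k (a_j'y)^2 \le \left(\sum_{j=1}^k \|a_j\|^2\right) \|y\|^2 = \|A\|_F^2 \, \|y\|^2$, so that $\E_F[\|A'Y\|^r] \le \|A\|_F^r \, \E_F[\|Y\|^r] < \infty$, using again only the finiteness of the $r$-th moment built into the definition of $\F_r(X)$.

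Since $F$ and $A$ were arbitrary, both conditions hold for every $F \in \F_r(X)$ and every $A$ with $A'X = I_k$, which establishes the claimed inclusion. I do not expect a substantive obstacle here: the result is a routine verification, and the only two points requiring any care are ensuring the first moment exists (so that unbiasedness is even meaningful) and bounding the $r$-th moment of the linear image by a norm inequality — both of which follow directly from the constraint $\E_F[\|Y\|^r] < \infty$ embedded in $\F_r(X)$.
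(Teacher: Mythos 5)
Your proof is correct and follows the same two-step structure as the paper's: verify unbiasedness from $A'X = I_k$ and linearity of expectation, then verify $A'Y \in \L^r(F)$ from $\E_F[\|Y\|^r] < \infty$. The only difference is in how the moment step is justified: the paper cites Rosenthal's inequality for the implication $\E_F[\|Y\|^r] < \infty \Rightarrow \E_F[\|A'Y\|^r] < \infty$, whereas you use the deterministic bound $\|A'y\| \le \|A\|_F\,\|y\|$ followed by monotonicity of expectation. Your route is arguably the cleaner one here, since Rosenthal's inequality concerns moments of sums of independent random variables and no independence is available (or needed) — a pointwise operator-norm bound is all the step requires.
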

\begin{proof}
  For any $F\in \F(X)$ and $A\in \R^{n\times k}$ satisfying $A'X = I_{k}$,
  \[\E_{F}[A'Y] = A'X \beta = \beta.\]
  By Rosenthal's inequality \citep{rosenthal1970subspaces}, \[F\in \F_{r}(X)\Longrightarrow \E_{F}[\|Y\|^r]<\infty \Longrightarrow \E_{F}[\|A'Y\|^r] < \infty\Longrightarrow A'Y \in \L^{r}(F).\]
  Thus, $u(y) = A'y\in \class_r(\F_r(X))$.
\end{proof}

Most recent discussions about \cite{hansen2022modern} are essentially about when $\class_1(\F_1(X))\subset \estclass_{\lin}^{n, k}$. The following result shows that whenever it is true, the LUE estimators represent all unbiased estimators.

\begin{proposition}\label{prop:estclass_lin_estclass_GLS}
  For any $r\ge 1$ and $\F(X)\subset \F_r(X)$ such that $\sp\{\beta: F\in \F(X)\} = \R^{k}$,
  \[\class_r(\F(X))\subset \estclass_{\lin}^{n, k}\Longrightarrow \class_r(\F(X)) = \estclass_{\LUE}(X).\]
\end{proposition}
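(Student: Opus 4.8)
The plan is to prove the two inclusions separately, since the displayed conclusion is an equality $\class_r(\F(X)) = \estclass_{\LUE}(X)$, and only one of them actually requires the hypothesis.

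For the inclusion $\estclass_{\LUE}(X)\subset \class_r(\F(X))$, I would not invoke the hypothesis $\class_r(\F(X))\subset \estclass_{\lin}^{n,k}$ at all. Since $\F(X)\subset \F_r(X)$ by assumption, Proposition \ref{prop:monotonicity} gives $\class_r(\F_r(X))\subset \class_r(\F(X))$ (being unbiased over the larger family is the stronger requirement), while Proposition \ref{prop:estclass_GLS} gives $\estclass_{\LUE}(X)\subset \class_r(\F_r(X))$. Chaining these two inclusions yields $\estclass_{\LUE}(X)\subset \class_r(\F(X))$.

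For the reverse inclusion $\class_r(\F(X))\subset \estclass_{\LUE}(X)$, I would fix an arbitrary $u\in \class_r(\F(X))$ and use the hypothesis to write $u(y) = A'y$ for some $A\in \R^{n\times k}$. Unbiasedness then says that for every $F\in \F(X)$,
\[\beta = \E_F[A'Y] = A'\E_F[Y] = A'X\beta,\]
so that $(A'X - I_k)\beta = \zero$ for every $\beta$ attainable as $\beta(F)$ with $F\in \F(X)$.

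The one substantive step is turning these pointwise constraints into the matrix identity $A'X = I_k$, and this is precisely where the spanning hypothesis $\sp\{\beta: F\in \F(X)\} = \R^k$ is needed. Because the linear map $\beta\mapsto (A'X - I_k)\beta$ vanishes on a set whose span is all of $\R^k$, it must be the zero map, whence $A'X = I_k$ and $u\in \estclass_{\LUE}(X)$. I do not anticipate any genuine difficulty here; the only point worth stating carefully is that the span condition is what rules out $A'X$ disagreeing with $I_k$ on a proper subspace of $\R^k$, a disagreement that would otherwise remain consistent with unbiasedness over a family whose attainable coefficients fail to span.
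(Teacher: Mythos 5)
Your proposal is correct and follows essentially the same route as the paper's own proof: the containment $\estclass_{\LUE}(X)\subset \class_r(\F(X))$ via Propositions \ref{prop:monotonicity} and \ref{prop:estclass_GLS}, and the reverse containment by combining the linearity hypothesis with the spanning condition to force $A'X = I_k$. Your version merely makes the two inclusions and the role of the span condition slightly more explicit than the paper's terse write-up.
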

\begin{proof}
  For any $A\in \R^{n\times k}$, $u(y) = A'y \in \class_r(\F(X))$ iff for any $F\in \F(X)$, $u\in \L^{r}(F)$ and
  \[\E_{F}[A'Y] = A'X \beta = \beta.\]
  By Rosenthal's inequality \citep{rosenthal1970subspaces}, $A'Y \in \L^{r}(F)$ for any $A$. Since $\{\beta: F\in \F(X)\}$ spans $\R^{k}$, we must have $I_{k} - A'X = \zero_{k\times k}$ and hence $u\in \estclass_{\LUE}(X)$. The proof is then completed by Proposition \ref{prop:estclass_GLS}.
\end{proof}
The main result of \cite{potscher2022modern} \footnote{The authors posted a proof on social media that coincides with the first proof by \cite{potscher2022modern} two weeks before; see \url{https://twitter.com/lihua_lei_stat/status/1493291015129550849?s=21}.} can be paraphrased as follows.
\begin{theorem}[\cite{potscher2022modern}, Theorem 3.4; see also \cite{portnoy2022linear}]\label{thm:potscher}
  \[\class_1(\F_2(X)\cap \cP_{\disc})\subset \estclass_{\lin}^{n, k}.\]
\end{theorem}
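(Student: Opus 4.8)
The plan is to reduce the statement to a rigidity result for unbiased estimators of zero and then exploit the richness of finitely supported distributions. Write $P = X(X'X)^{-1}X'$ for the orthogonal projection onto $\mathrm{col}(X)$, and for any candidate $u \in \class_1(\F_2(X)\cap \cP_{\disc})$ set $v(y) = u(y) - (X'X)^{-1}X'y$. Since every finitely supported distribution has finite moments of all orders and lies in $\cP_{\disc}$, it belongs to $\F_2(X)\cap \cP_{\disc}$ precisely when its mean lies in $\mathrm{col}(X)$; on any such $F$ the unbiasedness of $u$ together with $\beta(F) = (X'X)^{-1}X'\E_F[Y]$ forces $\E_F[v(Y)] = 0$, and $v\in\L^1(F)$ holds automatically on a finite support. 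The goal then becomes to show that $v$ must be linear in $y$, which immediately gives $u \in \estclass_{\lin}^{n,k}$.

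The first step is to test $v$ against point masses and two-point mixtures. A point mass $\delta_y$ is admissible iff $y \in \mathrm{col}(X)$, and it yields $v(y) = 0$ for every such $y$. For a two-point mixture $q_0 \delta_{\xi_0} + q_1\delta_{\xi_1}$ with $q_0 + q_1 = 1$, admissibility requires the orthogonal components $a_i = (I-P)\xi_i$ to satisfy $q_0 a_0 + q_1 a_1 = 0$, and unbiasedness of zero then gives $q_0 v(\xi_0) + q_1 v(\xi_1) = 0$. Taking $q_0 = q_1 = 1/2$ and $a_0 = -a_1$ shows $v(\xi_0) = -v(\xi_1)$ whenever $(I-P)\xi_0 = -(I-P)\xi_1$; letting the $\mathrm{col}(X)$-component of $\xi_0$ range freely then reveals that $v$ is constant on each coset of $\mathrm{col}(X)$. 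I can therefore define $h(w) = v(\xi)$ for any $\xi$ with $(I-P)\xi = w$, a well-defined map on $\mathrm{col}(X)^{\perp}$, and the same computation records that $h$ is odd.

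The second step upgrades $h$ to a linear map. Keeping a two-point mixture but varying the weights, with $a_1 = w$ and $a_0 = -(q_1/q_0)w$, produces the homogeneity relation $h(sw) = s\,h(w)$ for every $s > 0$, hence for every real $s$ by oddness. Additivity comes from a symmetric three-point mixture: placing equal mass $1/3$ on points with orthogonal components $w_1$, $w_2$, and $-(w_1+w_2)$ meets the mean constraint and yields $h(w_1)+h(w_2)+h(-(w_1+w_2)) = 0$, i.e. $h(w_1+w_2) = h(w_1)+h(w_2)$. An additive, real-homogeneous map is genuinely $\R$-linear, so $h(w) = Mw$ for some matrix $M$, whence $v(y) = M(I-P)y$ and $u(y) = \big((X'X)^{-1}X' + M(I-P)\big)y$ is linear.

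The main obstacle is bookkeeping rather than analysis: every test distribution must be engineered so that its mean lands exactly in $\mathrm{col}(X)$ (guaranteeing membership in $\F_2(X)$) while its orthogonal components realize the prescribed cancellations, and I must check there is enough freedom — a free $\mathrm{col}(X)$-component together with arbitrary orthogonal directions — to reach all of $\mathrm{col}(X)^{\perp}$. A pleasant consequence of using mixtures with varying weights is that I obtain full real homogeneity directly, so the passage from additivity to linearity requires no measurability or continuity hypothesis on $u$ and sidesteps pathological additive solutions of Cauchy's equation; the finite-support restriction does the rest by making the $\L^1$ and second-moment conditions automatic.
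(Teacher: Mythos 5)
Your proof is correct, and it is a genuinely different route from the machinery this paper builds. The paper states Theorem \ref{thm:potscher} as a cited result and derives its own generalization (Theorem \ref{thm:GLS_master}, part (c)) through a functional-analytic pipeline: the master duality theorem (Theorem \ref{thm:master}), proved by tilting a base measure $F$ with bounded perturbations $1+ch$ and invoking $\L^1$--$\L^\infty$ duality and a bipolar/Hahn--Banach argument, followed by an appendix construction of discrete mixtures $F_y$ whose supports contain both $y$ and a fixed spanning set, so that the ``almost sure'' linearity can be upgraded to pointwise linearity. You instead argue directly: after centering by the OLS estimator you test the resulting unbiased-estimator-of-zero against point masses, two-point mixtures, and a symmetric three-point mixture whose means are engineered to lie in $\mathrm{col}(X)$, obtaining that $v$ depends only on $(I-P)y$, is odd, positively homogeneous (hence fully homogeneous), and additive on $\mathrm{col}(X)^{\perp}$, which forces linearity without any continuity or measurability input beyond what the finite supports already provide. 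All the admissibility checks (means in $\mathrm{col}(X)$, automatic $\L^1$ and second-moment conditions on finite supports) go through, and the two-point weighting trick that yields homogeneity for every real scalar is exactly what lets you bypass pathological solutions of Cauchy's equation. This is essentially the elementary strategy of \cite{potscher2022modern} and \cite{portnoy2022linear} (and the authors' own early argument), and its advantage is that it is short and self-contained; what it does not buy you is the paper's added generality, since it freely uses test distributions with arbitrary means in $\mathrm{col}(X)$ and arbitrary discrete supports, whereas the duality-based approach extends to finitely many values of $\beta$, to dominated classes $\cP(F)$ and $\cP_{\cont}$, to higher-moment requirements, and to the second-moment-constrained (Koopmann) setting where the answer is linear-plus-quadratic rather than linear.
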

Together with Proposition \ref{prop:estclass_lin_estclass_GLS}, Theorem \ref{thm:potscher} implies that only LUE estimators can be unbiased under linear models without further constraints on the covariance matrix, even if the model only involves discrete distributions with finite second moments, i.e.,
\[\class_1(\F_2(X)\cap \cP_{\disc}) = \estclass_{\LUE}(X);\]
see also the footnote 7 of \cite{potscher2022modern}. 

While only linear estimators can be unbiased under $\F_2(X)$, the conclusion fails for $\F_{2, \iid}^{I_{k}}(X)$.
\begin{proposition}[\cite{hansen2022modern}, remark above Theorem 5]
  \[\estclass_{\LUE}(X)\subsetneq \class_1\lb\F_{2,\iid}^{I_{k}}(X)\rb.\]
\end{proposition}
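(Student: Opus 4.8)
The statement asserts a strict inclusion, so the plan is to prove the inclusion $\estclass_{\LUE}(X)\subseteq \class_1(\F_{2,\iid}^{I_n}(X))$ first and then exhibit a member of the right-hand side that is not linear. The inclusion is immediate from the tools already assembled: since every $F\in \F_{2,\iid}^{I_n}(X)$ satisfies $\E_F[\|Y\|^2]<\infty$ and in particular $\E_F[Y]=X\beta$, we have $\F_{2,\iid}^{I_n}(X)\subseteq \F_1(X)$, so Proposition \ref{prop:estclass_GLS} with $r=1$ gives $\estclass_{\LUE}(X)\subseteq \class_1(\F_1(X))$, and Proposition \ref{prop:monotonicity} then yields $\class_1(\F_1(X))\subseteq \class_1(\F_{2,\iid}^{I_n}(X))$. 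Throughout I would assume the nontrivial regime $n>k$; when $n=k$ the column space of $X$ is all of $\R^{n}$, the two sides coincide, and there is nothing strict to prove.

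For strictness the key observation is that fixing the covariance to the scalar form $\sigma^2 I_n$ (rather than leaving it arbitrary as in $\F_2(X)$, where Theorem \ref{thm:potscher} rules out nonlinear unbiased estimators) creates room for a quadratic term of mean zero. Concretely, I would fix a nonzero $a$ in the column space of $X$, a nonzero $b$ in its orthogonal complement (which exists precisely because $n>k$), and a nonzero $v\in \R^{k}$, and propose the linear-plus-quadratic candidate
\begin{equation*}
  u(Y) = (X'X)^{-1}X'Y + v\,(a'Y)(b'Y).
\end{equation*}
The residual direction $b$ satisfies $X'b=0$ and $a'b=0$, which are exactly the two orthogonality relations that make the quadratic term vanish in expectation.

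To verify $u\in \class_1(\F_{2,\iid}^{I_n}(X))$ I would compute, for any $F$ in the class with $\Cov_F[Y]=\sigma^2 I_n$ and $\E_F[Y]=X\beta$,
\begin{equation*}
  \E_F[(a'Y)(b'Y)] = a'\,\E_F[YY']\,b = \sigma^2 (a'b) + (a'X\beta)(b'X\beta) = 0,
\end{equation*}
using $a'b=0$ and $X'b=0$; hence $\E_F[u(Y)]=\beta$. Finiteness of the first moment follows from $|(a'Y)(b'Y)|\le \|a\|\,\|b\|\,\|Y\|^2$ together with $\E_F[\|Y\|^2]<\infty$. Finally $u$ is genuinely nonlinear: $(a'Y)(b'Y)=Y'(ab')Y$ is a nonzero quadratic form, since its symmetric Hessian $ab'+ba'$ is nonzero (as $a,b$ are nonzero and orthogonal, hence linearly independent), so $u\notin \estclass_{\lin}^{n,k}\supseteq \estclass_{\LUE}(X)$. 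Combining this with the inclusion proved above gives the strict containment.

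I do not anticipate a genuine obstacle; the only point requiring care is conceptual rather than technical, namely recognizing that the \emph{same} quadratic construction fails for $\F_2(X)$ (its mean picks up the term $a'\Lambda b$ under a general covariance $\Lambda$) and succeeds here solely because the homoskedastic scalar covariance kills the cross term. It is worth noting that the construction uses only homoskedasticity, not the full i.i.d.\ restriction, so the same $u$ witnesses the strict inclusion for the larger class $\F_2^{I_n}(X)$ as well, with the i.i.d.\ case following by Proposition \ref{prop:monotonicity}.
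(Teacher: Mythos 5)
Your proof is correct, but it takes a different route from the paper. The paper does not give a self-contained argument: it simply cites Hansen's example, which perturbs the OLS estimator by a mean-zero term of the form ``$Y_i$ times the leave-$i$-th-observation-out residual,'' a construction whose unbiasedness exploits the independence of $Y_i$ from the remaining observations. Your witness $u(Y)=(X'X)^{-1}X'Y+v\,(a'Y)(b'Y)$, with $a$ in the column space of $X$ and $b$ in its orthogonal complement, instead uses only the second-moment structure: the computation $\E_F[(a'Y)(b'Y)]=\sigma^2 a'b+(a'X\beta)(b'X\beta)=0$ needs nothing beyond $\Cov_F[Y]=\sigma^2 I_n$ and $X'b=0$. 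This buys you the strictly stronger conclusion $\estclass_{\LUE}(X)\subsetneq \class_1\lb\F_{2}^{I_n}(X)\rb$ for the larger homoskedastic-but-not-necessarily-independent class, from which the stated claim follows by Proposition \ref{prop:monotonicity}; your estimator is in fact an element of $\estclass_{\koopmann}^{I_n}(X)$, so your argument is really a one-line instance of the easy inclusion in Theorem \ref{thm:koopmann_unbiased}, whereas Hansen's example lives outside that class's natural habitat and genuinely needs independence. Two minor points in your favor: you correctly flag that strictness requires $n>k$ (the proposition as stated is false when $n=k$, where both sides reduce to $\{X^{-1}y\}$), and you implicitly correct the paper's $I_k$ to $I_n$ (the covariance matrix is $n\times n$). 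All verification steps (mean-zero via $a'b=0$ and $X'b=0$, integrability via Cauchy--Schwarz and the finite second moment, nonlinearity via the nonvanishing symmetric matrix $ab'+ba'$) are sound.
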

\cite{hansen2022modern} provides a simple example that adds a mean-zero term, $Y_i$ multiplied by the leave-$i$-th-observation-out residual, onto the OLS estimator. 

Theorem \ref{thm:potscher} also fails if further assumptions are imposed on the covariance matrix. Consider the class of LPQ estimators:
\begin{equation}
  \label{eq:estclass_LPQ}
  \estclass_{\LPQ}^{n, k} = \{u(y) = A'y + (y' B_1 y, \ldots, y' B_k y)':  A\in \R^{n\times k}, B_j \in \R_{\sym}^{n\times n}, j = 1, \ldots, k\},
\end{equation}
where $\R_{\sym}^{n\times n}$ is the set of all symmetric matrices in $\R^{n\times n}$. This class was first studied by \cite{anderson1962least} who showed that, in the model class $\F_{2, \iid}^{I_{k}}$, an LPQ estimator with appropriately chosen $B_j$'s can have a smaller variance for estimating a specific linear contrast of $\beta$ than the best LUE estimator, unless $n = k$ or the errors are Gaussian \citep[e.g.][]{brudno1957dispersion, kagan1969}.

Subsequently, \cite{koopmann1982parameterschatzung} proved a remarkable representation theorem that all unbiased estimators under $\F_2^\Sigma(X)$ are given by a subset of $\estclass_{\LPQ}^{n, k}$: \footnote{\cite{koopmann1982parameterschatzung} does not require the symmetry of $B_j$, though he commented in a remark that $B_j$ can be assumed to be symmetric without loss of generality because $y'B_j y = y'(B_j + B_j')y / 2$. }

\begin{align}
  \estclass_{\koopmann}^{\Sigma}(X) = \{&u(y) = A'y + (y' B_1 y, \ldots, y' B_k y)':  \nonumber\\
  & A\in \R^{n\times k}, A'X = I_{k},  B_j \in \R_{\sym}^{n\times n}, \tr(B_j \Sigma) = 0, X' B_j X = \zero_{k\times k}, j = 1, \ldots, k\},    \label{eq:estclass_LPQ}
\end{align}

\begin{theorem}[\cite{koopmann1982parameterschatzung}, Theorem 4.1]\label{thm:koopmann_unbiased}
  For any given $\Sigma\succeq \zero_{k\times k}$, 
  \[\class_1(\F_2^\Sigma(X)) = \estclass_{\koopmann}^{\Sigma}(X),\]
  where $\F_2^\Sigma(X)$ be defined in \eqref{eq:F_Sigma}.
\end{theorem}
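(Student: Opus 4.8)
The plan is to prove the two inclusions separately. The inclusion $\estclass_{\koopmann}^{\Sigma}(X)\subseteq \class_1(\F_2^\Sigma(X))$ is the routine direction: given $u(y)=A'y+(y'B_1y,\dots,y'B_ky)'$ with $A'X=I_k$, $\tr(B_j\Sigma)=0$ and $X'B_jX=\zero_{k\times k}$, I would compute, for any $F\in\F_2(X;\beta,\sigma^2\Sigma)$, that $\E_F[A'Y]=A'X\beta=\beta$ and, using $\E_F[Y'B_jY]=\tr(B_j\Cov_F[Y])+\E_F[Y]'B_j\E_F[Y]=\sigma^2\tr(B_j\Sigma)+\beta'(X'B_jX)\beta$, that each quadratic coordinate has mean zero. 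Finiteness of the second moment of $F$ gives $u\in\L^1(F)$, so $u\in\class_1(\F_2^\Sigma(X))$.

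For the reverse inclusion, fix $u\in\class_1(\F_2^\Sigma(X))$. Since the OLS map $u^{\OLS}(y)=(X'X)^{-1}X'y=A_0'y$, with $A_0=X(X'X)^{-1}$ satisfying $A_0'X=I_k$, lies in $\class_1(\F_2^\Sigma(X))$ (it is a LUE, cf.\ Proposition \ref{prop:estclass_GLS}), the function $h:=u-u^{\OLS}$ is an \emph{unbiased estimator of zero}: $\E_F[h(Y)]=\zero$ and $h\in\L^1(F)$ for every $F\in\F_2^\Sigma(X)$. It therefore suffices to show that every such $h$ has the form $h(y)=\td{A}'y+(y'M_1y,\dots,y'M_ky)'$ with $\td{A}'X=\zero_{k\times k}$, $M_j=M_j'$, $\tr(M_j\Sigma)=0$ and $X'M_jX=\zero_{k\times k}$. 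Adding $u^{\OLS}$ back then gives $u(y)=(A_0+\td{A})'y+(y'M_1y,\dots,y'M_ky)'$ with combined linear coefficient obeying $(A_0+\td{A})'X=I_k$, while $y'M_jy=y'\sym(M_j)y$ lets us symmetrize, placing $u$ in $\estclass_{\koopmann}^{\Sigma}(X)$.

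The crux is the characterization of $h$, which is an instance of the Rüschendorf-type completeness result, and I would argue componentwise. Fix a stratum $\F_2(X;\beta,\sigma^2\Sigma)$ and a full-support absolutely continuous reference $F_0$ in it (e.g.\ a Gaussian when $\Sigma\succ\zero_{k\times k}$). For any bounded, compactly supported density perturbation $g$ with $\int g\,dy=0$, $\int y\,g\,dy=\zero$ and $\int yy'g\,dy=\zero_{n\times n}$ — i.e.\ $g$ annihilating the monomials $1,y_i,y_iy_j$ — the measure $F_0+\epsilon\,g\,dy$ stays in the same stratum for small $\epsilon>0$ and hence lies in $\F_2^\Sigma(X)$, so unbiasedness of $h$ forces $\int h_\ell\,g\,dy=0$. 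Restricting to a bounded region $\Omega$, the linear functional $g\mapsto\int_\Omega h_\ell\,g\,dy$ on $L^\infty(\Omega)$ (well defined since $h_\ell\in\L^1(F_0)$ and $F_0$ has density bounded below on $\Omega$, so $h_\ell\in L^1(\Omega)$) vanishes on the common kernel of the finitely many functionals $g\mapsto\int_\Omega 1\cdot g,\ \int_\Omega y_i\,g,\ \int_\Omega y_iy_j\,g$; a finite-dimensional duality argument then shows it is a linear combination of them, i.e.\ $h_\ell$ agrees a.e.\ on $\Omega$ with a quadratic polynomial. Patching over bounded regions, $h_\ell$ equals Lebesgue-a.e.\ a single quadratic $c_\ell+a_\ell'y+y'M_\ell y$.

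With the quadratic form in hand, I would determine its coefficients from global unbiasedness: for $F\in\F_2(X;\beta,\sigma^2\Sigma)$ one has $\E_F[h_\ell(Y)]=c_\ell+a_\ell'X\beta+\sigma^2\tr(M_\ell\Sigma)+\beta'(X'M_\ell X)\beta$, and requiring this to vanish for every $\beta\in\R^k$ and $\sigma^2>0$ — reading off the constant term, the $\sigma^2$-coefficient, and the terms linear and quadratic in $\beta$ — yields $c_\ell=0$, $X'a_\ell=\zero$, $\tr(M_\ell\Sigma)=0$ and $X'M_\ell X=\zero_{k\times k}$, exactly the constraints defining $\estclass_{\koopmann}^{\Sigma}(X)$. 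I expect the main obstacle to be the completeness step above: one must verify that the monomial-annihilating perturbations genuinely keep the measure within a single stratum (nonnegativity together with exact matching of mean \emph{and} covariance) and then execute the duality cleanly; this is precisely the technical core abstracted by \cite{ruschendorf1987unbiased}. The degenerate case $\Sigma\succeq\zero_{k\times k}$ requires additional care, as the relevant almost-everywhere statement and the uniqueness of the representation are then taken with respect to the lower-dimensional Lebesgue measure on the affine support of $Y$.
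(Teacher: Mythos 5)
Your forward inclusion is the paper's partial proof verbatim (compute $\E_F[A'Y]=\beta$ and use the trace identity to kill the quadratic terms). For the reverse inclusion your strategy is essentially the paper's as well: the paper reduces to unbiased estimators of zero via \eqref{eq:UFX_U0FX}, proves a R\"{u}schendorf-type completeness result (Theorem \ref{thm:master}) by perturbing a reference distribution inside the moment-defined family and invoking $\L^1$--$\L^{\infty}$ duality, concludes in Lemma \ref{lem:koopmann_master} that every unbiased estimator of zero is LPQ, and then pins down the coefficient constraints by varying $\beta$ and $\sigma^2$ exactly as you do. Your Lebesgue-density perturbations on bounded regions are a concrete instance of the paper's abstract tilting $d\td{F}/dF = 1+ch$, and your finite-codimension duality step is the analogue of Proposition \ref{prop:finite_dim} plus Proposition \ref{prop:hahn_banach}. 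So the architecture is right.

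The genuine gap is the passage from ``$h_\ell$ equals a quadratic Lebesgue-a.e.'' to membership in $\estclass_{\koopmann}^{\Sigma}(X)$, which is a class of everywhere-defined LPQ functions, while the theorem asserts exact set equality. Because your perturbations are all absolutely continuous, your argument in fact only characterizes $\class_1(\F_2^\Sigma(X)\cap\cP_{\cont})$ up to Lebesgue-null modifications; it cannot see what $h$ does on a null set, whereas $\F_2^\Sigma(X)$ also contains discrete distributions whose unbiasedness constraints are precisely what force pointwise equality. The paper closes this in Appendix \ref{subapp:koopmann_master}: Lemma \ref{lem:basis} produces a finite set $\mathcal{Y}_0$ with $\sp\lb\com{z}{\vec^{*}(zz')}: z\in\mathcal{Y}_0\rb=\R^{(n+1)n/2}$, and for each point $y$ a discrete member $F_y$ of the family is constructed whose support contains both $y$ and $\mathcal{Y}_0$; the a.s.\ representation under $F_y$ then yields a quadratic $u_{0,y}$ that is shown, via the spanning property, not to depend on $y$, giving $u=u_0$ at every point. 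You flag the extra care needed when $\Sigma$ is singular, but not this more basic a.e.-versus-everywhere issue, and without some version of the discrete construction your proof does not reach the stated equality.
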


\cite{koopmann1982parameterschatzung} only presented a proof sketch for this result in his book (Section 3 and 4, Chapter I). It is relatively easy to prove that $\estclass_{\koopmann}^{\Sigma}(X)\subset \class_1(\F_2^\Sigma(X))$ via simple linear algebra. The other direction is much more challenging and we will generalize Theorem \ref{thm:koopmann_unbiased} in Section \ref{sec:main} with a self-contained proof.

\begin{proof}[Proof of $\estclass_{\koopmann}^{\Sigma}(X)\subset \class_1(\F_2^\Sigma(X))$:]
  For any $u\in \estclass_{\koopmann}^{\Sigma}(X)$ with parameters $A, B_1, \ldots, B_k$ and $F\in \class_1(\F_2^\Sigma(X))$, $u\in \L^2(F)$ and
  \[\E_{F}[u(Y)] = \E_{F}[A'Y] + (\E_{F}[Y'B_1 Y], \ldots, \E_{F}[Y'B_k Y]).\]
  Let $\eps = Y - X\beta$. By definition of $\class_1(\F_2^\Sigma(X))$, we have $\E_{F}[\eps] = \zero_{k}$, where $\zero_{k}$ denotes the $k$-dimensional zero vector, and $\E_{F}[\eps\eps'] = \sigma^2\Sigma$ for some $\sigma^2 > 0$. Then,
  \[\E_{F}[A'Y] = A'X\beta = \beta.\]
  For each $j$,
  \begin{align*}
    \E_{F}[Y'B_j Y]
    &= \E_{F}[\beta'X'B_j X\beta] + 2\E_{F}[\beta' XB_j \eps] + \E_{F}[\eps' B_j \eps]\\
    & \stackrel{(i)}{=}\E_{F}[\eps' B_j \eps] = \E_{F}[\tr(\eps'B_j \eps)]\\
    & \stackrel{(ii)}{=} \E_{F}[\tr(B_j \eps\eps')] = \sigma^2\E_{F}[\tr(B_j\Sigma)] \\
    & \stackrel{(iii)}{=} 0,
  \end{align*}
  where (i) invokes the restriction that $X'B_j X = \zero_{k\times k}$ and $\E_{F}[\eps] = \zero_{k}$, (ii) uses the fact that $\tr(AB) = \tr(BA)$ and $\E_{F}[\eps\eps'] = \sigma^2\Sigma$, and (iii) applies the condition that $\E_{F}[\tr(B_j \Sigma)] = 0$.
\end{proof}

\subsection{Revisiting B(L)UE for fixed-design linear models}


\begin{definition}\label{def:BUE}
  We say an estimator $u^{*}(Y)$ is BUE in a class of unbiased estimators $\estclass(X)$ with respect to a class of distributions $\F(X)\in \F_2(X)$ if
  \[\Var_{F}[u(Y)]\succeq \Var_{F}[u^{*}(Y)], \quad \text{for any }u\in \estclass(X) \text{ and any }F\in \F(X).\]
  When $\estclass(X) = \estclass_{\LUE}(X)$, we say $h$ is BLUE with respect to $\F(X)$.
\end{definition}
Define the 
the GLS estimator associated with a covariance matrix $\Sigma$,
\begin{equation}
  \label{eq:GLS}
  \hat{\beta}_{\GLS}^{\Sigma}(Y) = (X'\Sigma^{-} X)^{-}X'\Sigma^{-} Y,
\end{equation}
where $A^{-}$ denotes the generalized inverse of $A$. Note that the OLS estimator  $\hat{\beta}_{\OLS}(Y) = (X'X)^{-1}X'Y$ is equivalent to $\hat{\beta}_{\GLS}^{I_k}(Y)$.

The confusion often arises when $\estclass(X)$ and $\F(X)$ are not explicitly defined. We summarize the existing results on BUE and non-BUE of the OLS and GLS estimators separately in the following two theorems. By paraphrasing them on an equal footing, we hope to reveal the nuances. 

\begin{theorem}\label{thm:summary_OLS}
  \begin{enumerate}[(a)]
\item (Gauss-Markov theorem) $\hat{\beta}_{\OLS}(Y)$ is BLUE with respect to $\F_2(X)$.
\item $\hat{\beta}_{\OLS}(Y)$ is BUE in $\estclass(X)$ with respect to $\F(X)$ if one of the following conditions hold:
  \begin{enumerate}[(1)]
  \item (\cite{hansen2022modern}, Theorem 2; implied by Cram\'{e}r-Rao theorem) 
    \[\estclass(X) = \class_1(\F(X)), \quad \F(X) = \F_2(X) \cap \{\mathrm{Normal}(X\beta, \sigma^2 I_{n})\text{ for some }\sigma^2 > 0\}.\]
  \item (\cite{anderson1962least}, Proposition 2) 
  \[n = k = \rank(X), \quad \estclass(X) = \class_1(\F_{2,\iid}^{I_{n}}(X)), \quad \F(X) = \F_{2}(X).\]
  In fact, \cite{anderson1962least} proves this result by showing that $\estclass(X) = \{\hat{\beta}_{\OLS}(Y)\}$.
\item (\cite{hansen2022modern}, Theorem 6)
  \[\estclass(X) = \class_1\lb\bigcup_{\Sigma \text{ diagonal}}\F_{2,\inde}^{\Sigma}(X)\rb, \quad \F(X) = \F_{2,\inde}^{I_{n}}(X).\]
\item (\cite{fraser1954completeness}, together with Rao-Blackwell Theorem)
  \[X = (1, 1, \ldots, 1)', \quad \estclass(X) = \class_1(\F_{2, \iid}^{I_{n}}(X)), \quad \F(X) = \F_{2, \iid}^{I_{n}}(X).\]
  It is not implied by Theorem 7 of \cite{hansen2022modern}, itself a special case of his Theorem 6.
  \end{enumerate}
\item (\cite{anderson1962least}, Proposition 5) $\hat{\beta}_{\OLS}(Y)$ is \textbf{not} BUE in $\estclass(X)$ with respect to $\F(X)$ if 
  \[\min_{i = 1,\ldots,n}\rank(X_{-i}) = k < n, \quad \estclass(X) = \class_1(\F_{2,\inde}^{I_{n}}(X)), \quad \F(X) = \F_{2,\inde}^{I_{n}}(X),\]
  where $X_{-i}$ is the design matrix with the $i$-th row removed.
  \end{enumerate}
\end{theorem}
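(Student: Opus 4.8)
The statement is a compendium of results drawn from the literature, so the plan is to prove each item with the classical tool best matched to it and to reserve the real effort for the independence-based claims, which are where the paper-specific difficulty lives. I would organize the argument by technique rather than by label. \textbf{Algebraic domination (part (a)).} This is the textbook Gauss--Markov mechanism: writing any $u\in\estclass_{\LUE}(X)$ as $u(y)=A'y$ with $A'X=I_k$, I decompose $A=X(X'X)^{-1}+D$, where the constraint forces $D'X=\zero_{k\times k}$. Since the OLS weight $A_0=X(X'X)^{-1}$ obeys $A_0'D=(X'X)^{-1}X'D=\zero_{k\times k}$, the cross term drops out and $\Var_F[u]-\Var_F[\hat\beta_{\OLS}]=\sigma^2 D'D\succeq\zero_{k\times k}$ once the covariance is scalar, giving the claim.

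\textbf{Cram\'er--Rao / Lehmann--Scheff\'e (parts (b)(1) and (b)(3)).} For (b)(1) I would observe that on the Gaussian subfamily the pair $(X'Y,\|Y\|^2)$ is complete and sufficient and $\hat\beta_{\OLS}$ is a function of it; equivalently $\hat\beta_{\OLS}$ attains the matrix Cram\'er--Rao bound $\sigma^2(X'X)^{-1}$, so it is BUE among all estimators unbiased over that subfamily. Part (b)(3) is exactly Hansen's Theorem 6, which I would invoke: the proof tilts the homoskedastic independent law along a parametric path inside $\F_{2,\inde}^{I_n}(X)$, computes the Fisher information of the path, and verifies that $\hat\beta_{\OLS}$ meets the associated Cram\'er--Rao bound; the diagonal-covariance competitor class $\class_1(\bigcup_{\Sigma\text{ diagonal}}\F_{2,\inde}^{\Sigma}(X))$ is precisely what makes these tilts admissible.

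\textbf{Completeness (parts (b)(2) and (b)(4)).} These are uniqueness statements. When $n=k$ the invertibility of $X$ turns the iid model into a pure location family $Y=\mu+\sigma\eps$ with $\mu\in\R^n$ free, so any estimator unbiased of zero, $v$, satisfies the convolution identity $(v*\phi)(\mu)\equiv 0$ against the Gaussian member $\phi$; Fourier inversion then forces $v=0$ almost everywhere, whence $\class_1(\F_{2,\iid}^{I_n}(X))=\{\hat\beta_{\OLS}\}$ and BUE with respect to any subclass of $\F_2(X)$ is immediate. For the constant design I would appeal to Fraser's completeness of the order statistic over iid families, and then Rao--Blackwell/Lehmann--Scheff\'e to identify $\bar Y=\hat\beta_{\OLS}$ as the unique UMVU estimator; the footnote's non-implication remark is separate and needs no proof.

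\textbf{The main obstacle (part (c)).} The substantive, non-classical step is exhibiting a nonlinear unbiased estimator that strictly beats $\hat\beta_{\OLS}$, and here the hypothesis $\min_i\rank(X_{-i})=k<n$ does the work. In the spirit of Anderson's LPQ construction I would add a quadratic form $Y'BY$ with symmetric $B$ satisfying $X'BX=\zero_{k\times k}$ and $\tr B=0$; the verification of the previous excerpt shows such a term lies in $\class_1(\F_{2,\inde}^{I_n}(X))$ as an unbiased estimator of zero, so $\hat\beta_{\OLS}-tY'BY$ is an admissible competitor. The crux is then a moment computation: evaluating at $\beta=\zero_k$ to silence the $\beta$-linear part, under an error law with finite fourth moment and nonzero common third moment $\gamma$ one finds $\Cov_F(\hat\beta_{\OLS},Y'BY)=\gamma\,(X'X)^{-1}X'\,d$, where $d$ is the vector of diagonal entries of $B$, because independence kills every cross third moment $\E[\eps_i\eps_k\eps_l]$ except the diagonal ones. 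A scalar contrast of $\hat\beta_{\OLS}-tY'BY$ then has strictly smaller variance for suitably small $t$, violating the matrix ordering. I expect the delicate bookkeeping to be twofold and to be exactly what the rank hypothesis resolves: first, showing the constraint set $\{B=B':X'BX=\zero_{k\times k},\ \tr B=0,\ X'd\neq\zero_k\}$ is nonempty (so the cubic covariance does not collapse), and second, selecting a skewed, finite-fourth-moment error distribution that activates it.
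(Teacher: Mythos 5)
The paper itself supplies no proof of this theorem: it is an annotated compendium in which every item is discharged by the citation attached to it, so the only question is whether your sketches would actually go through. Parts (a), (b)(1), (b)(3) and (b)(4) are essentially fine. Two small caveats: your Gauss--Markov computation (correctly) needs a scalar covariance for the cross term $A_0'\Lambda D$ to vanish, so (a) must be read as BLUE with respect to $\F_2^{I_n}(X)$ rather than the literal $\F_2(X)$; and in (b)(2) the convolution identity $(v*\phi)(\mu)\equiv 0$ only yields $v=0$ Lebesgue-almost everywhere, which leaves a little work before one can compare variances under the discrete members of $\F_2(X)$ that the BUE claim quantifies over.

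The genuine gap is in part (c). You take the witness distribution to have a \emph{common} third moment $\gamma$, so your cubic covariance is $\gamma\, c'(X'X)^{-1}X'd$ with $d=\diag(B)$, and you then need a symmetric $B$ with $X'BX=\zero_{k\times k}$, $\tr B=0$ and $X'd\neq \zero_k$. No such $B$ exists in cases squarely covered by the hypothesis: for the constant design $X=(1,\ldots,1)'$, which satisfies $\min_i \rank(X_{-i})=1=k<n$ for every $n\ge 2$, one has $X'd=\tr B=0$ for \emph{every} admissible $B$, so the covariance you compute vanishes identically and no iid skewed error law can beat OLS --- consistently with (b)(4), which says OLS \emph{is} BUE there under iid errors. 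The point of Anderson's proposition is precisely that $\F_{2,\inde}^{I_n}(X)$ admits independent but non-identically distributed errors: with per-coordinate third moments $\gamma_m$ the covariance of $c'\hat{\beta}_{\OLS}$ with $Y'BY$ becomes $\sum_m z_m B_{mm}\gamma_m$, where $z=X(X'X)^{-1}c$, and since the $\gamma_m$ may be chosen freely subject to the fixed variance $\sigma^2$, it suffices to find an admissible $B$ with $B_{mm}\neq 0$ for some $m$ with $z_m\neq 0$ (for the constant design, $B=\diag(1,-1,0,\ldots,0)$ already qualifies, paired with, say, a skewed $\eps_1$ and a symmetric $\eps_2$). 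As written, your scheme can never produce a witness in exactly the situations where heterogeneity of the error distribution, rather than skewness alone, is what defeats OLS; the rank condition's role is to guarantee such an admissible $B$ exists with a nonvanishing diagonal entry at a usable coordinate, not to make $X'd$ nonzero.
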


\begin{theorem}\label{thm:summary_GLS}
  \begin{enumerate}[(a)]
\item (\cite{aitkin1935least}) For a given $\Sigma \succ \zero_{k\times k}$, $\hat{\beta}_{\GLS}^{\Sigma}(Y)$ is BLUE with respect to $\F_2^{\Sigma}(X)$.
\item For a given $\Sigma \succ \zero_{k\times k}$, $\hat{\beta}_{\GLS}^{\Sigma}(Y)$ is BUE in $\estclass(X)$ with respect to $\F(X)$ if any of the following conditions hold:
  \begin{enumerate}[(1)]
  \item (\cite{hansen2022modern}, Theorem 4)
    \[\estclass(X) = \class_1(\F_2(X)), \quad \F(X) = \F_2^{\Sigma}(X).\]
    However, \cite{potscher2022modern} and \cite{portnoy2022linear} showed that $\class_1(\F_2(X)) \subset \estclass_{\lin}^{n, k}$. Thus, this is alien to (a).
  \item (\cite{hansen2022modern}, Theorem 5) $\Sigma$ is diagonal, 
    \[\estclass(X) = \class_1\lb\bigcup_{\Sigma \text{ diagonal}}\F_{2,\inde}^{\Sigma}(X)\rb, \quad \F(X) = \F_{2, \inde}^{\Sigma}(X).\]
  \end{enumerate}
\item (\cite{gnot1992nonlinear}) $\hat{\beta}_{\GLS}^{\Sigma}(Y)$ is \textbf{not} BUE in $\estclass(X)$ with respect to $\F(X)$ if
  \[\estclass(X) = \class_1(\F_2^{\Sigma}(X)), \quad \F(X) = \F_2^{\Sigma}(X).\]
\end{enumerate}
\end{theorem}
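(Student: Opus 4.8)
The plan is to establish the four assertions in turn, leaning on the framework and propositions already in place. Part (a) is the classical Aitken argument. For the given positive definite $\Sigma$, writing $A_0 = \Sigma^{-1} X (X' \Sigma^{-1} X)^{-1}$ I would first verify $A_0' X = I_k$, so that $\hat{\beta}_{\GLS}^{\Sigma}(Y) = A_0' Y \in \estclass_{\LUE}(X)$. For any competing $u(y) = A' y$ with $A' X = I_k$ and any $F \in \F_2^{\Sigma}(X)$ with $\Cov_F[Y] = \sigma^2 \Sigma$, set $D = A - A_0$, so $X' D = \zero_{k \times k}$. The cross term cancels, $A_0' \Sigma D = (X' \Sigma^{-1} X)^{-1} X' D = \zero_{k \times k}$, whence $\Var_F[u(Y)] = \sigma^2(A_0' \Sigma A_0 + D' \Sigma D) \succeq \sigma^2 A_0' \Sigma A_0 = \Var_F[\hat{\beta}_{\GLS}^{\Sigma}(Y)]$, since $D' \Sigma D \succeq \zero_{k \times k}$. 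This is precisely the BLUE claim.

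For part (b)(1) I would reduce to (a). Since $\F_2(X) \cap \cP_{\disc} \subset \F_2(X)$, Proposition \ref{prop:monotonicity} gives $\class_1(\F_2(X)) \subset \class_1(\F_2(X) \cap \cP_{\disc})$, and Theorem \ref{thm:potscher} bounds the latter by $\estclass_{\lin}^{n,k}$. Because $\F_2(X)$ contains members realizing every $\beta \in \R^k$, the span condition $\sp\{\beta : F \in \F_2(X)\} = \R^k$ of Proposition \ref{prop:estclass_lin_estclass_GLS} holds, so $\class_1(\F_2(X)) = \estclass_{\LUE}(X)$. As the variance comparison is taken over $F \in \F_2^{\Sigma}(X) \subset \F_2(X)$ and over estimators now known to be LUE, part (a) applies verbatim; this also makes transparent why (b)(1) is ``alien'' to (a), the two being literally the same statement once completeness collapses the estimator class.

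The genuinely hard part is (b)(2), Hansen's Theorem 5, where the estimator class $\class_1(\bigcup_{\Sigma \text{ diagonal}} \F_{2,\inde}^{\Sigma}(X))$ is strictly larger than $\estclass_{\LUE}(X)$ and contains nonlinear estimators, so linearity cannot be invoked. Here I would follow the distribution-tilting plus Cram\'{e}r--Rao strategy. For a fixed target $F_0 \in \F_{2,\inde}^{\Sigma}(X)$ with diagonal $\Sigma$, I would construct a smooth finite-dimensional parametric submodel through $F_0$ whose members retain independent errors and diagonal covariance, so that every $u$ in the class stays unbiased along the path; the Cram\'{e}r--Rao bound in this submodel then lower-bounds $\Var_{F_0}[u]$. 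The crux is to select the \emph{least-favorable} tilt so that the associated Fisher information equals the GLS information $\sigma^{-2} X' \Sigma^{-1} X$, making the bound coincide with $\Var_{F_0}[\hat{\beta}_{\GLS}^{\Sigma}(Y)]$. Verifying that the tilted laws remain inside the independence-plus-diagonal class with finite second moments, and that the unbiasedness constraint differentiates to the matching information identity, is where the real work lies and is the step I expect to be the main obstacle.

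Finally, for part (c) I would exhibit a concrete non-GLS competitor. By Koopmann's representation (Theorem \ref{thm:koopmann_unbiased}), $\class_1(\F_2^{\Sigma}(X)) = \estclass_{\koopmann}^{\Sigma}(X)$, which, whenever $n > k$, admits nonzero symmetric $B_j$ with $\tr(B_j \Sigma) = 0$ and $X' B_j X = \zero_{k \times k}$ (when $n = k$ the class collapses to $\estclass_{\LUE}(X)$ and GLS is vacuously BUE, so this caveat is genuinely needed). I would add such a quadratic correction to GLS, $u(y) = \hat{\beta}_{\GLS}^{\Sigma}(y) + (y' B_1 y, \ldots, y' B_k y)'$, which stays unbiased by the very computation used to prove $\estclass_{\koopmann}^{\Sigma}(X) \subset \class_1(\F_2^{\Sigma}(X))$. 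After the reduction $Y \mapsto \Sigma^{-1/2} Y$ to the homoskedastic case, an Anderson-type calculation shows that the covariance between the linear and quadratic parts is governed by the third moments of the errors; for a suitably skewed $F \in \F_2^{\Sigma}(X)$ one can tune the $B_j$ so that $\Var_F[u(Y)] \not\succeq \Var_F[\hat{\beta}_{\GLS}^{\Sigma}(Y)]$, contradicting BUE. The delicate point, mirroring Anderson's original non-Gaussian, $n > k$ construction, is to certify a genuine Loewner violation for an explicitly built skewed, independent-coordinate error law rather than merely an improvement along a single contrast.
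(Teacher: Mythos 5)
This theorem is stated in the paper as a literature summary --- each part is attributed to its original source (Aitken, Hansen's Theorems 4 and 5, Gnot et al.) and no proof is supplied in the text --- so there is no in-paper argument to compare against line by line. Your parts (a) and (b)(1) are complete and correct: (a) is the standard Aitken decomposition $A = A_0 + D$ with $X'D = \zero_{k\times k}$ and vanishing cross term, and (b)(1) correctly reduces to (a) via Proposition \ref{prop:monotonicity}, Theorem \ref{thm:potscher}, and Proposition \ref{prop:estclass_lin_estclass_GLS}, which is exactly the point the paper's parenthetical remark is making.

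Parts (b)(2) and (c), however, are strategies rather than proofs, and the deferred steps are precisely where the content of the cited results lives. For (b)(2) you name the distribution-tilting plus Cram\'{e}r--Rao route but do not construct the least-favorable submodel, verify that the tilted laws stay inside the independence-plus-diagonal class, or show the resulting Fisher information matches $\sigma^{-2}X'\Sigma^{-1}X$; this is the entirety of Hansen's Theorem 5 and cannot be waved at. For (c), your outline via Koopmann's representation and a skewed error law is the right one (it is how \cite{gnot1992nonlinear} proceeds), but two points need correction. First, under Definition \ref{def:BUE} the negation of BUE is exactly the existence of some $F$ and some $u$ with $v'\Var_F[u(Y)]v < v'\Var_F[\hat{\beta}_{\GLS}^{\Sigma}(Y)]v$ for a single direction $v$, so an Anderson-style improvement along one linear contrast already suffices; your closing worry that a ``genuine Loewner violation'' requires more than a single-contrast improvement is backwards and makes the task look harder than it is. Second, the model class in (c) is $\F_2^{\Sigma}(X)$, which constrains only the first two moments, so the competing distribution need not have independent coordinates --- you have far more freedom in choosing third moments than your ``independent-coordinate error law'' phrasing suggests. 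With those corrections, the remaining work in (c) is an explicit third-moment computation of $\Cov_F(A_0'\eps, \eps'B_j\eps)$ for a suitably skewed $F$, which you have identified but not carried out.
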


  Contrasting Theorem \ref{thm:summary_OLS} (b) (3) with Theorem \ref{thm:summary_OLS} (c) or Theorem \ref{thm:summary_GLS} (b) (1) with Theorem \ref{thm:summary_GLS} (c), we can see that the choice of $\estclass(X)$ is crucial --- while the OLS/GLS can be BUE in a smaller class of estimators, they may no longer be optimal in a larger class.

\section{Unbiased estimators for fixed-design linear models}\label{sec:main}
\subsection{A master theorem}
For any class of fixed-design linear models $\F(X)\subset \F_r(X)$, let
\begin{equation}
  \label{eq:U0FX}
  \bclass_r(\F(X)) = \{u: \R^{n}\mapsto \R: \E_{F}[u(Y)] = 0, u\in \L^{r}(F),\,\, \text{ for all }F\in \F(X)\}.
\end{equation}
Since $\hat{\beta}_{\OLS}$ is linear, by Rosenthal's inequality \citep{rosenthal1970subspaces},
\[\E_{F}[\|Y\|^r] < \infty\Longrightarrow \hat{\beta}_{\OLS}\in \L^{r}(F).\]
Thus, 
\[u\in \class_r(\F(X)) \Longleftrightarrow u - \hat{\beta}_{\OLS} \in \{(u_1, \ldots, u_{k})': u_j\in \bclass_r(F(X))\} \triangleq \bclass_{r}(\F(X))^{\otimes k},\]
where $\otimes k$ denotes the $k$-th order tensor product, and thus,
\begin{equation}
  \label{eq:UFX_U0FX}
  \class_{r}(\F(X)) = \hat{\beta}_{\OLS} + \bclass_{r}(\F(X))^{\otimes k}.
\end{equation}
As a consequence, to find all unbiased estimators of $\beta$, it suffices to characterize all unbiased estimators of $0$. 

It is not hard to see that both $\F_{r}(X; \beta)$ and $\F_{r}(X; \beta, \Lambda)$ can be formulated as
\begin{equation}
  \label{eq:F_r_G}
  \{F: \E_{F}[g(Y)] = 0,\,\, g(Y)\in \L^{r}(F), \,\, \text{ for all }g\in \G\},
\end{equation}
where
\begin{equation}
  \label{eq:GforF}
  \G = \left\{
    \begin{array}{ll}
      \left\{y\mapsto y_{i} - x_{i}'\beta: i = 1, \ldots, n\right\} & \text{for }\F_{r}(X; \beta)\\
      \left\{y\mapsto y_{i} - x_{i}'\beta: i = 1, \ldots, n\right\} & \\
      \,\,\cup \left\{y\mapsto (y_{i} - x_{i}'\beta)(y_{\ell} - x_{\ell}'\beta) - \Lambda_{i\ell}: i,\ell = 1, \ldots, n\right\} & \text{for }\F_{r}(X; \beta, \Lambda)\\
    \end{array}\right.,
\end{equation}
and $x_i'$ denotes the $i$-th row of $X$.

  Intuitively, if we treat $F$ as an element in the ``dual space'' of real-valued measurable functions, though this is incorrect in the rigorous sense, $\F(X)$ is the orthogonal complement of $\G$ in the ``dual space'' with respect to the bilinear form $(u, F)\mapsto \E_{F}[u(Y)]$. Similarly, $\bclass_r(\F(X))$ can be regarded as the orthogonal complement of $\F(X)$. Then we should expect, in an unrigorous sense, that $\bclass_{r}(\F(X)) = (\G^{\perp})^{\perp}$, which is the closure of the span of $\G$ in an appropriately defined normed vector space. This intuitive argument, together with \eqref{eq:GforF}, suggests that all unbiased estimators under $\F_r(X; \beta)$ are linear and those under $\F_r(X; \beta, \Lambda)$ are LPQ, hence implying the results by \cite{potscher2022modern}, \cite{portnoy2022linear}, and \cite{koopmann1982parameterschatzung}.

  We formalize the above heuristics in the following theorem. Below, for a given class of distributions $\F$, we write $\L^{r}(\F)$ for $\bigcap_{F\in \F}\L^{r}(F)$. Moreover, for any subsets of functions $\mathcal{G}$ and $\mathcal{H}$, we say $\mathcal{G} \subset \mathcal{H}$ almost surely under $F$ if for each $g\in \mathcal{G}$ there exists $h\in \mathcal{H}$ such that $g - h = 0$ almost everywhere under $F$, and say $\mathcal{G} = \mathcal{H}$ if $\mathcal{G}\subset \mathcal{H}$ and $\mathcal{H}\subset \mathcal{G}$ almost surely under $F$.

\begin{theorem}[slight generalization of Theorem 1 of \cite{ruschendorf1987unbiased}]\label{thm:master}
  Let $\G$ be a set of real-valued measurable functions on $\R^{n}$ (equipped with the Borel $\sigma$-field) and
  \[\F_{r, \G} = \left\{F: \E_{F}[g(Y)] = 0, g(Y)\in \L^{r}(F), \,\, \text{for all }g\in \G\right\}.\]
  Then for any $F\in \F_{r, \G}$,
  \[\bclass_{r}(\F_{r, \G}\cap \cP(F)) \subset \cl_{F}(\sp(\G))\cap \L^{r}(\F_{r, \G}\cap \cP(F)) \subset \bclass_{r}(\F_{r, \G}\cap \cPb(F))\text{ almost surely under }F,\]
  where $\cP(F)$ and $\cPb(F)$ are the dominating classes defined in \eqref{eq:cP_mu} and \eqref{eq:cPb_mu}, respectively, and $\cl_{F}(\cdot)$ denotes the closure in $\L^{1}(F)$. 
\end{theorem}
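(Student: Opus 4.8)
The plan is to prove the two inclusions separately, reducing everything to the Banach space $\L^1(F)$. Since $F\in \F_{r,\G}$ and $F\ll F$, taking the reference measure itself as a test distribution shows that any $u$ appearing on either side lies in $\L^r(F)\subset \L^1(F)$ and satisfies $\E_F[u]=0$; likewise each $g\in \G$ lies in $\L^r(F)\subset \L^1(F)$, so $\sp(\G)$ and its $\L^1(F)$-closure $\cl_F(\sp(\G))$ are genuine closed subspaces of $\L^1(F)$. Throughout I identify functions that agree $F$-almost everywhere, which is exactly what the qualifier ``almost surely under $F$'' records.

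For the second (easier) inclusion, take $u\in \cl_F(\sp(\G))\cap \L^r(\F_{r,\G}\cap\cP(F))$ and any $P\in \F_{r,\G}\cap\cPb(F)$, so that $w:=dP/dF$ is bounded. Choose $u_m\in \sp(\G)$ with $u_m\to u$ in $\L^1(F)$; each $u_m$ is a finite linear combination of elements of $\G$, so $\E_P[u_m]=0$. Boundedness of $w$ then gives $|\E_P[u]-\E_P[u_m]|=|\E_F[(u-u_m)w]|\le \|w\|_\infty\,\E_F[|u-u_m|]\to 0$, hence $\E_P[u]=0$; and $u\in \L^r(P)$ because $P\in \F_{r,\G}\cap\cP(F)$. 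Thus $u\in \bclass_r(\F_{r,\G}\cap\cPb(F))$.

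The first inclusion is the substantive one and rests on a Hahn--Banach separation followed by a tilting construction. Suppose $u\in \bclass_r(\F_{r,\G}\cap\cP(F))$ but, for contradiction, $u\notin \cl_F(\sp(\G))$. Separating the point $u$ from this closed subspace of $\L^1(F)$ and invoking the duality $(\L^1(F))^{\ast}=\L^\infty(F)$ produces $w\in \L^\infty(F)$ with $\E_F[gw]=0$ for every $g\in \G$ but $\E_F[uw]\ne 0$. Set $w_0=w-\E_F[w]$, and for a small $c\neq 0$ with $1+cw_0\ge 0$ almost everywhere under $F$, define $P$ by $dP/dF=1+cw_0$. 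Then $P$ is a probability measure dominated by $F$ with bounded density, so $P\in \cPb(F)$; moreover $\E_P[g]=\E_F[g]+c\,\E_F[gw_0]=0$ and $\E_P[|g|^r]\le \|1+cw_0\|_\infty\E_F[|g|^r]<\infty$ for each $g\in \G$, and similarly $u\in \L^r(P)$, so $P\in \F_{r,\G}\cap\cP(F)$. Unbiasedness of $u$ then forces $\E_P[u]=0$, whereas the tilting identity gives $\E_P[u]=\E_F[u]+c\,\E_F[uw_0]=c\,\E_F[uw]\ne 0$, a contradiction. Therefore $u\in \cl_F(\sp(\G))$, and $u\in \L^r(\F_{r,\G}\cap\cP(F))$ holds by the very definition of $\bclass_r$.

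I expect the main obstacle to be this first inclusion, specifically two points: realizing the abstract separating functional concretely as a \emph{bounded} density through $(\L^1(F))^{\ast}=\L^\infty(F)$, and then verifying that the tilted law $P$ genuinely lands inside the model class, i.e. that every moment equality $\E_P[g]=0$ and the $\L^r$-integrability of both $g\in\G$ and $u$ survive the tilt. Boundedness of $1+cw_0$ is precisely what makes these checks routine and what allows the weaker class $\cPb(F)$ to appear on the right of the second inclusion. The construction is also well posed: in the contradiction branch $\E_F[uw]\ne 0$ rules out $w_0\equiv 0$ (otherwise $w$ is $F$-a.e.\ constant and $\E_F[uw]=\E_F[w]\E_F[u]=0$), so $\|w_0\|_\infty>0$ and a valid $c\in(0,\|w_0\|_\infty^{-1}]$ exists.
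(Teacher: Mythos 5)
Your proof is correct and uses essentially the same mechanism as the paper's: the $(\L^1(F))^{*}=\L^{\infty}(F)$ duality combined with the bounded tilt $dP/dF=1+ch$ of $F$ by an annihilator $h$ of $\G$, which stays inside $\F_{r,\G}\cap\cP(F)$ and forces $\E_F[uh]=0$. The only (cosmetic) difference is that you argue by contradiction with a single Hahn--Banach separating functional, centering it by hand, whereas the paper shows $u$ annihilates all of $(\G\cup\{\mathbf{1}\})^{\perp}$ and then invokes the double-annihilator identity $(\G^{\perp})^{\perp}=\cl_F(\sp(\G))$.
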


When $\G$ only contains a finite number of functions, as in \eqref{eq:GforF}, it is well-known that $\sp(\G)$ is closed (see Proposition \ref{prop:finite_dim}). Then Theorem \ref{thm:master} implies that
\begin{equation}
  \label{eq:G_finite_1}
  \bclass_r(\F_{r, \G}\cap \cP(F)) \subset \sp(\G).
\end{equation}
On the other hand, for any $F \in \F_{r, \G}$,
\[\G\subset \L^{r}(F)\Longrightarrow \sp(\G)\subset \L^{r}(F), \quad F\in \F_{r, \G}\Longrightarrow F\in \F_{r, \sp(\G)}.\]
As a consequence,
\begin{equation}
  \label{eq:G_finite_2}
  \sp(\G)\subset \bclass_{r}(\F_{r, \G})\subset \bclass_{r}(\F_{r, \G}\cap \cP(F)),
\end{equation}
where the second step follows from Proposition \ref{prop:monotonicity}. \eqref{eq:G_finite_1} and \eqref{eq:G_finite_2} imply the following simplified version of Theorem \ref{thm:master}.
\begin{theorem}\label{thm:master_finite}
  If $|\G| < \infty$, then for any $r\ge 1$ and $F\in \F_{r, \G}$,
  \[\bclass_r(\F_{r, \G}\cap \cP(F)) = \sp(\G) \text{ almost surely under }F.\]
\end{theorem}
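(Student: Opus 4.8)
The plan is to read Theorem \ref{thm:master_finite} off as an immediate corollary of the master theorem (Theorem \ref{thm:master}), the point being that finiteness of $\G$ collapses the closure appearing there. Concretely, I would prove the two inclusions
\[\bclass_r(\F_{r,\G}\cap\cP(F))\subset\sp(\G)\subset\bclass_r(\F_{r,\G}\cap\cP(F))\quad\text{almost surely under }F,\]
which are precisely \eqref{eq:G_finite_1} and \eqref{eq:G_finite_2}, and then combine them. Since set containment ``almost surely under $F$'' is implied by exact containment (take the witness $h=g$), chaining the two displays yields the claimed equality in the almost-sure sense defined above.

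For the first (``$\subset$'') inclusion I would invoke Theorem \ref{thm:master}, which already gives $\bclass_r(\F_{r,\G}\cap\cP(F))\subset\cl_F(\sp(\G))\cap\L^r(\F_{r,\G}\cap\cP(F))$ almost surely under $F$. Because $|\G|<\infty$, the span $\sp(\G)$ is a finite-dimensional subspace of $\L^1(F)$ and is therefore closed (Proposition \ref{prop:finite_dim}), so $\cl_F(\sp(\G))=\sp(\G)$. Discarding the intersection with $\L^r(\cdot)$ only enlarges the right-hand side, which leaves $\bclass_r(\F_{r,\G}\cap\cP(F))\subset\sp(\G)$ almost surely under $F$.

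For the reverse (``$\supset$'') inclusion I would first check that every element of $\sp(\G)$ is an unbiased estimator of zero over the entire class $\F_{r,\G}$. Fixing $h=\sum_{g\in\G}c_g\,g\in\sp(\G)$ and any $F\in\F_{r,\G}$, each $g\in\G$ lies in $\L^r(F)$ with $\E_F[g(Y)]=0$, so linearity gives $h\in\L^r(F)$ and $\E_F[h(Y)]=0$; hence $\sp(\G)\subset\bclass_r(\F_{r,\G})$. Since $\F_{r,\G}\cap\cP(F)\subset\F_{r,\G}$, the monotonicity of $\bclass_r$ (Proposition \ref{prop:monotonicity}) then yields $\bclass_r(\F_{r,\G})\subset\bclass_r(\F_{r,\G}\cap\cP(F))$, giving $\sp(\G)\subset\bclass_r(\F_{r,\G}\cap\cP(F))$.

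The genuinely hard content is entirely absorbed into Theorem \ref{thm:master}, the functional-analytic identification of the unbiased estimators of zero with $\cl_F(\sp(\G))$; granting that result, the only extra ingredient here is the closedness of finite-dimensional subspaces, and the remainder is the routine linearity-and-monotonicity bookkeeping sketched above. I therefore expect no real obstacle at this level, so long as the ``almost surely under $F$'' qualifier is tracked carefully through both inclusions when passing to the final equality.
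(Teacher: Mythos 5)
Your proposal is correct and follows essentially the same route as the paper: the forward inclusion comes from Theorem \ref{thm:master} together with the closedness of the finite-dimensional span $\sp(\G)$ (Proposition \ref{prop:finite_dim}), and the reverse inclusion comes from checking $\sp(\G)\subset\bclass_r(\F_{r,\G})$ by linearity and then applying the monotonicity in Proposition \ref{prop:monotonicity}. This matches the derivation of \eqref{eq:G_finite_1} and \eqref{eq:G_finite_2} in the text verbatim, so there is nothing to add.
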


Theorem \ref{thm:master_finite} generalizes Theorem 1A and 2A of \cite{hoeffding1977more} and Theorem 3.1 of \cite{koopmann1982parameterschatzung}. It will be our key tool to study unbiased estimators for fixed-design linear models. 

\subsection{Unbiased estimators without second moment constraints}

Much of the recent discussion on \cite{hansen2022modern} is about $\F_{2}(X)$, which solely imposes restrictions on the first moment of $Y$. In this section, we will substantially generalize \cite{potscher2022modern} and \cite{portnoy2022linear}, showing that $\class_{r}(\F(X)) = \estclass_{\GLS}(X)$ even when $r > 1$ and $\F(X)$ is much smaller than $\F_{r}(X)$. We start with a surprising result that all unbiased estimators must be linear even if the domain of $\beta$ has only two distinct values with the same direction and the error distribution is restricted into a dominating class.

\begin{lemma}\label{lem:GLS_master}
  For any $r\ge 1$, $\beta^{*}\in \R^{k}$ and $c_1 < c_2\in \R$, let
  \[\F(X) = \bigcup_{\beta\in \{c_j\beta^{*}: j=1,2\}}\F_{r}(X; \beta).\]
  Further, let $F$ be any probability measure in $\F(X)$ such that
  \[\F_{r}(X; \beta)\cap \cP(F) \neq \emptyset, \quad \text{for all }\beta\in \{c_j\beta^{*}: j=1,2\}.\]
  Then
    \[\bclass_{r}(\F(X)\cap \cP(F))\subset \estclass_{\lin}^{n, 1}\,\, \text{almost surely under }F.\]
\end{lemma}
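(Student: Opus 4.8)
The plan is to exploit the two-point structure of the mean. Unbiasedness over the union $\F(X) = \F_r(X; c_1\beta^*) \cup \F_r(X; c_2\beta^*)$ forces unbiasedness over each stratum separately, and Theorem~\ref{thm:master_finite} pins down the shape of $u$ on whichever stratum contains the anchoring measure $F$. Since $F \in \F(X)$ it lies in at least one stratum; assume $F \in \F_r(X; c_1\beta^*)$ (otherwise swap the two indices below --- only $c_1 \neq c_2$ is used). I would first record that, for a scalar $u$,
\[
\bclass_r(\F(X)\cap\cP(F)) = \bclass_r(\F_r(X; c_1\beta^*)\cap\cP(F)) \cap \bclass_r(\F_r(X; c_2\beta^*)\cap\cP(F)),
\]
because imposing $\E_{F'}[u]=0$ with $u\in\L^r(F')$ for every $F'$ in a union is the same as imposing it on each piece. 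Fix any $u$ in the left-hand side.

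Next I would apply the master theorem to the stratum containing $F$. By \eqref{eq:GforF}, $\F_r(X; c_j\beta^*) = \F_{r,\G_j}$ with the \emph{finite} generating set $\G_j = \{y \mapsto y_i - c_j x_i'\beta^* : i = 1,\ldots,n\}$. Since $F \in \F_{r,\G_1}$, Theorem~\ref{thm:master_finite} yields $u \in \sp(\G_1)$ almost surely under $F$; that is, there is $a \in \R^n$ with
\[
u(y) = \sum_{i=1}^n a_i (y_i - c_1 x_i'\beta^*) = a'y - c_1\, a'X\beta^* \quad \text{almost surely under } F.
\]
Thus $u$ is already affine almost surely under $F$, with slope $a$ and intercept $-c_1 a'X\beta^*$; it remains only to show this intercept vanishes.

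To kill the intercept I would use the second stratum, where the nonemptiness hypothesis becomes essential. Pick any $F_2 \in \F_r(X; c_2\beta^*)\cap\cP(F)$, which exists by assumption. Because $F_2 \ll F$, the identity $u(y) = a'y - c_1 a'X\beta^*$ also holds almost surely under $F_2$, and membership in $\bclass_r(\F_r(X; c_2\beta^*)\cap\cP(F))$ gives $\E_{F_2}[u]=0$. Since $\E_{F_2}[Y] = X c_2\beta^*$ and $a'Y \in \L^1(F_2)$, taking expectations gives
\[
0 = a'\E_{F_2}[Y] - c_1\, a'X\beta^* = (c_2 - c_1)\, a'X\beta^*.
\]
As $c_2 \neq c_1$, we obtain $a'X\beta^* = 0$, so the intercept $-c_1 a'X\beta^*$ is zero and $u(y) = a'y$ almost surely under $F$, i.e. $u \in \estclass_{\lin}^{n,1}$ almost surely under $F$.

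The main obstacle is the asymmetry that blocks a direct, symmetric use of the master theorem: Theorem~\ref{thm:master_finite} requires the anchoring dominating measure to lie in the stratum $\F_{r,\G}$, yet $F$ belongs to only one of the two strata. The resolution is to invoke the full strength of the master theorem only on the stratum containing $F$ --- which produces the affine representation with $F$ itself as the reference measure, as needed for the conclusion --- and to extract from the other stratum nothing more than a single first-moment identity, for which the mere existence of one dominated member $F_2$ suffices. This is precisely where the hypothesis $\F_r(X;\beta)\cap\cP(F)\neq\emptyset$ for both values of $\beta$ is consumed, and where the separation $c_1 \neq c_2$ converts the two affine constraints into linearity.
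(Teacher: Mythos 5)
Your proof is correct and follows essentially the same route as the paper's: restrict to the stratum containing $F$, invoke Theorem~\ref{thm:master_finite} with the finite generating set from \eqref{eq:GforF} to get the affine form $u(y)=\sum_i a_i(y_i-c_1x_i'\beta^*)$ almost surely under $F$, then use one dominated member of the other stratum and $c_1\neq c_2$ to kill the intercept. Your explicit remark that the affine identity transfers to $F_2$ because $F_2\ll F$ is a point the paper leaves implicit, but the argument is otherwise identical.
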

\begin{proof}
  Throughout this proof, we will write $=$ and $\subset$ for equality and subsets almost surely under $F$. By Proposition \ref{prop:monotonicity},
  \[\bclass_{r}(\F(X)\cap \cP(F))\subset \bigcap_{\beta\in \{c_j\beta^{*}: j=1,2\}}\bclass_{r}(\F_{r}(X; \beta)\cap \cP(F)).\]
Without loss of generality, assume that $F\in \F_{r}(X; c_1\beta^{*})$. Taking any $u\in \bclass_{r}(\F(X)\cap \cP(F))$, we must have $u\in \bclass_{r}(\F_{r}(X; c_1\beta^{*})\cap \cP(F))$. By \eqref{eq:GforF} and Theorem \ref{thm:master_finite}, there exists $a_{i}\in \R$ such that
    \[u(y) =\sum_{i=1}^{n}a_{i}(y_{i} - c_1 x_{i}'\beta^{*}).\]
    Since $u\in \bclass_{r}(\F_r(X; c_2\beta^{*})\cap \cP(F))$ and $\F_{r}(X; c_2\beta^{*})\cap \cP(F)\neq \emptyset$, for any $\td{F}\in \F_{r}(X; c_2\beta^{*})\cap \cP(F)$,
    \[0 = \E_{\td{F}}[u(Y)] = (c_{2} - c_{1})\sum_{i=1}^{n}a_{i}x_{i}'\beta^{*} \Longrightarrow \sum_{i=1}^{n}a_{i}x_{i}'\beta^{*} = 0.\]
    Then $u(y)$ can be simplified into $u(y) = \sum_{i=1}^{n}a_{i}y_{i}$. Therefore,
    \[\bclass_{r}(\F(X)\cap \cP(F))\subset \estclass_{\lin}^{n, 1}.\]
\end{proof}

With Lemma \ref{lem:GLS_master}, we can generalize Theorem \ref{thm:potscher} by restricting the values of $\beta$ to a finite set that spans $\R^{k}$ and allowing the error distribution to be absolutely continuous with respect to the Lebesgue measure or any probability measure. 
\begin{theorem}\label{thm:GLS_master}
  Suppose $\B\subset \R^{k}$ includes at least $2k$ vectors
  $\{c_{jm}\beta_{m}^{*}: j = 1, 2, m = 1, \ldots, k\}$ such that $c_{1m} < c_{2m}\in \R$ and 
\[ \sp(\B) = \R^{k}.\]
  For any $r\ge 1$, let
  \[\F(X) = \bigcup_{\beta\in \B}\F_{r}(X; \beta).\]
  \begin{enumerate}[(a)]
  \item   For any probability measure $F\in \F(X)$ such that $\F_r(X; \beta)\cap \cP(F) \neq \emptyset$ for any $\beta\in \B$, 
    \[\class_r(\F(X)\cap \cP(F)) = \estclass_{\GLS}(X) \,\,\text{almost surely under }F.\]
  \item Let $\mu$ be the Lebesgue measure on $\R^{k}$. Then
    \[\class_r(\F(X)\cap \cP_{\cont}) = \estclass_{\GLS}(X) \,\,\text{almost surely under }\mu.\]
  \item Let $\cP_{\disc}$ be defined in \eqref{eq:cP_disc}. Then
    \[\class_r(\F(X)\cap \cP_{\disc}) = \estclass_{\GLS}(X).\]
  \end{enumerate}
\end{theorem}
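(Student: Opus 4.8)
The plan is to prove all three parts from a single core computation for part (a) and then obtain (b) and (c) by choosing the dominating measure appropriately (the class written $\estclass_{\GLS}(X)$ in the statement is the LUE class $\estclass_{\LUE}(X)$ of \eqref{eq:estclass_GLS}). In every case the inclusion $\estclass_{\LUE}(X)\subset\class_r(\cdot)$ is the easy direction: since each model class in the statement is contained in $\F_r(X)$, Proposition \ref{prop:estclass_GLS} together with the monotonicity of $\class_r(\cdot)$ (Proposition \ref{prop:monotonicity}) gives it immediately. So I would focus on the reverse inclusion, and by the decomposition \eqref{eq:UFX_U0FX} it suffices to show that every scalar $u\in\bclass_r(\F(X)\cap\cP(\cdot))$ is of the form $u(y)=a'y$ with $a'X=0$; assembling $k$ such components and adding $\hat\beta_{\OLS}=(X'X)^{-1}X'y$ then yields a map $\tilde A'y$ with $\tilde A'X=I_k$, i.e. an element of $\estclass_{\LUE}(X)$.

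For part (a), fix the given $F$, whose mean is $X\beta_0$ for some $\beta_0\in\B$. Applying the master theorem (Theorem \ref{thm:master_finite}) to the finite generating set $\G=\{y\mapsto y_i-x_i'\beta_0\}$ of $\F_r(X;\beta_0)$, and using monotonicity to pass from $\F(X)\cap\cP(F)$ down to $\F_r(X;\beta_0)\cap\cP(F)$, I get that any $u\in\bclass_r(\F(X)\cap\cP(F))$ equals an affine function $u(y)=a'y-a'X\beta_0$ almost surely under $F$. The remaining work is purely algebraic and mirrors the proof of Lemma \ref{lem:GLS_master}: for each $\beta\in\B$ pick $\tilde F\in\F_r(X;\beta)\cap\cP(F)$ (nonempty by hypothesis); since $\tilde F\ll F$ the affine representation persists $\tilde F$-a.s., so unbiasedness gives $a'X\beta=a'X\beta_0$ for all $\beta\in\B$. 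Feeding in a single ray pair $c_{1m}\beta_m^{*},c_{2m}\beta_m^{*}\in\B$ with $c_{1m}\neq c_{2m}$ forces both the constant $a'X\beta_0=0$ and $a'X\beta_m^{*}=0$; hence $a'X\beta=0$ for every $\beta\in\B$, and since $\sp(\B)=\R^k$ we conclude $a'X=0$. This is exactly the representation we want, and reassembly gives $\class_r(\F(X)\cap\cP(F))=\estclass_{\LUE}(X)$ almost surely under $F$.

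Part (b) is then a specialization: I would choose $F$ to be a distribution in $\F(X)$ with a strictly positive Lebesgue density --- for instance a Gaussian $N(X\beta_0,I_n)$ with $\beta_0\in\B$ --- so that $F$ and the Lebesgue measure $\mu$ share the same null sets and hence $\cP(F)=\cP_{\cont}$. The hypothesis of part (a) holds because for each $\beta\in\B$ the Gaussian $N(X\beta,I_n)$ lies in $\F_r(X;\beta)\cap\cP_{\cont}$. Invoking (a) and replacing ``almost surely under $F$'' by ``almost surely under $\mu$'' delivers the claim. For part (c) there is no single dominating measure inside $\cP_{\disc}$, so the conclusion is exact (pointwise) rather than almost sure and needs a covering argument. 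For any finite $F\in\F(X)\cap\cP_{\disc}$ whose support $S$ contains $\{X\beta:\beta\in\B\}$, the point masses $\delta_{X\beta}$ witness $\F_r(X;\beta)\cap\cP(F)\neq\emptyset$, and since $\cP(F)\cap\F(X)\subset\cP_{\disc}\cap\F(X)$, monotonicity plus part (a) shows $u$ agrees with some $\tilde A_F'y$, $\tilde A_F'X=I_k$, on all of $S$. I would fix a base support $S_0$ carrying $n$ linearly independent anchor points (together with the $X\beta$'s and a full-support discrete law of mean $X\beta_0$), which pins down a single $\tilde A$; then for an arbitrary $y_0\in\R^n$ I enlarge the support to $S_0\cup\{y_0\}$, observe that the new matrix must agree with $u$ --- hence with $\tilde A$ --- on the $n$ anchors and is therefore equal to $\tilde A$, so $u(y_0)=\tilde A'y_0$. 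As $y_0$ is arbitrary, this gives the exact identity on all of $\R^n$.

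The main obstacle is isolated in the master theorem feeding part (a); once Theorem \ref{thm:master_finite} is granted, the algebra reducing the affine representation to a linear one with $a'X=0$ is routine. The genuinely delicate bookkeeping lies in part (c): I must construct, for every target point $y_0$, a finitely supported distribution in $\F(X)$ that (i) has mean $X\beta_0\in\mathrm{relint}\,\mathrm{conv}(S)$ so that a full-support discrete law exists, (ii) contains all of $\{X\beta:\beta\in\B\}$ in its support to validate the nonemptiness hypothesis of (a), and (iii) retains a fixed set of $n$ independent anchors so the local linear representations patch into one global $\tilde A$. Verifying that these constraints can always be met simultaneously --- and that ``almost surely under a discrete $F$'' means exactly ``on the support of $F$'' --- is where the care is needed.
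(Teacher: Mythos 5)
Your proposal is correct and follows essentially the same route as the paper: Theorem \ref{thm:master_finite} applied to the single-$\beta$ stratum plus the ray-pair cancellation for part (a) (the paper isolates this as Lemma \ref{lem:GLS_master}), a full-support Gaussian to identify $\cP(F)$ with $\cP_{\cont}$ for part (b), and for part (c) a per-point discrete law containing $y_0$ together with $n$ spanning anchors, followed by patching the local linear representatives. The only step you leave unexecuted --- constructing a finitely supported law with prescribed mean $X\beta_0$ whose support contains $y_0$, the anchors, and witnesses for every $\beta\in\B$ --- is handled in the paper's appendix by the reflection device $\tfrac12(\delta_z+\delta_{2X\beta-z})$, which makes the mean constraint automatic.
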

\begin{proof}
  By Proposition \ref{prop:monotonicity} and Proposition \ref{prop:estclass_GLS}, for any $r\ge 1$,
  \[\estclass_{\GLS}(X)\subset \class_{r}(\F_r(X))\subset \class_{r}(\F(X)\cap \cP)\]
  for any subset $\cP$. Therefore, it remains to prove $\class_{r}(\F(X)\cap \cP)\subset \estclass_{\GLS}(X)$ with $\B = \{c_{jm}\beta_{m}^{*}: j=1,2, m=1,\ldots,k\}$ in each case.
  \begin{enumerate}[(a)]
  \item   Throughout this proof, we will write $=$ and $\subset$ for equality and subsets almost surely under $F$. Without loss of generality, assume that $F\in \F_{r}(X; c_{11}\beta_1^{*})$. By Proposition \ref{prop:monotonicity} and Lemma \ref{lem:GLS_master} with $\beta^{*} = \beta_1^{*}, (c_1, c_2) = (c_{11}, c_{21})$, we have that
    \[\bclass_r(\F(X)\cap \cP(F))\subset \estclass_{\lin}^{n, 1}.\]
    Thus, for any $u\in \bclass_r(\F(X)\cap \cP(F))$, $ u(y) = a'y$ for some $a\in \R^{n}$. 
    For any $j = 1,2$ and $m = 1, \ldots, k$, choose any distribution $\td{F}$ from $\F_r(X; c_{jm}\beta_{m}^{*})\cap \cP(F)$, which is non-empty. Then
    \[0 = \E_{\td{F}}[u(Y)] = a'X(c_{jm}\beta_{m}^{*}).\]
    Since $\sp(\B) = \R^{k}$, we must have $X'a = \zero_{k}$. Thus,
    \[\bclass_r(\F(X)\cap \cP(F))\subset \{u(y) = a'y: X'a = \zero_{k}\}.\]
    By \eqref{eq:UFX_U0FX},
    \begin{align*}
      \class_r(\F(X)\cap \cP(F)) &= \left\{u(y) = \hat{\beta}_{\OLS}(y) + \td{A}'y: \td{A}'X = \zero_{k\times k}\right\}\\
                               & = \left\{u(y) = ((X'X)^{-1}X' + \td{A}')y: \td{A}'X = \zero_{k\times k}\right\}\\
      & \subset \estclass_{\GLS}(X).
    \end{align*}
  \item Let $F \sim N(X(c_{11}\beta_1^{*}), I_{k})$, the multivariate normal distribution with mean $X(c_{11}\beta_1^{*})$ and covariance matrix $I_{k}$. Clearly, $F\in \F(X)$ and, for any $\beta\in \B$, $j=1,2$, and $m = 1,\ldots, k$,
    \[N(X(c_{jm}\beta_{m}^{*}), I_{k})\in \F_{r}(X; c_{jm}\beta_{m}^{*})\cap \cP(F).\]
    Then, the part (a) entails that
    \[\class_r(\F(X)\cap \cP(F))\subset \estclass_{\GLS}(X)\text{ almost surely under }F.\]
    Since $F$ has positive density with respect to $\mu$ everywhere, $\cP(F) = \cP(\mu)$ and a function is almost surely zero under $F$ iff it is so under $\mu$. Therefore,
    \[\class_r(\F(X)\cap \cP_{\cont}) = \class_r(\F(X)\cap \cP(\mu))\subset \estclass_{\GLS}(X)\text{ almost surely under }\mu.\]
  \item The proof is quite technical; see Appendix \ref{subapp:GLS_master}.
  \end{enumerate}
\end{proof}

\subsection{Unbiased estimators with second moment constraints}
Next, we will consider the case where the covariance matrix is constrained. Perhaps surprisingly, even if $\Cov(Y)$ can only take two distinct values that differ by a multiplicative constant and $\beta$ can only take three distinct values pointing to the same direction, any unbiased estimator must  be LPQ. 

\begin{lemma}\label{lem:koopmann_master}
  For any $r\ge 2$, $\beta^{*}\in \R^{k}$, $c_1 < c_2 < c_3\in \R$, $\Sigma \succ \zero_{k\times k}$, and $0 < \sigma_{1}^2 < \sigma_{2}^2$, let
  \[\F(X) = \bigcup_{\beta\in \{c_j\beta^{*}: j=1,2,3\}}\bigcup_{\Lambda \in \{\sigma_{j}^2 \Sigma: j = 1,2\}}\F_{r}(X; \beta, \Lambda).\]
  Further, let $F$ be any probability measure in $\F(X)$ such that
  \[\F_{r}(X; \beta, \Lambda)\cap \cP(F) \neq \emptyset, \quad \text{for all }\beta\in \{c_j\beta^{*}: j=1,2,3\}, \,\Lambda\in \{\sigma_{j}^2 \Sigma: j = 1,2\}.\]
  Then
  \[\bclass_r(\F(X)\cap \cP(F))\subset \estclass_{\LPQ}^{n, 1}\,\, \text{almost surely under }F.\]
\end{lemma}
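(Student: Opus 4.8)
The plan is to run the template of the proof of Lemma~\ref{lem:GLS_master}, the only difference being that the moment conditions for a covariance-fixed stratum now contribute the quadratic generators in \eqref{eq:GforF}. Throughout, $=$ and $\subset$ denote equality and inclusion almost surely under $F$. By Proposition~\ref{prop:monotonicity}, every $u\in\bclass_r(\F(X)\cap\cP(F))$ lies in $\bclass_r(\F_r(X;\beta,\Lambda)\cap\cP(F))$ for each of the six strata in the union. Assuming without loss of generality (only distinctness of the $c_j$ and of the $\sigma_j^2$ is used) that $F\in\F_r(X;c_1\beta^{*},\sigma_1^2\Sigma)$, I would apply Theorem~\ref{thm:master_finite} to this single stratum with the quadratic $\G$ of \eqref{eq:GforF}. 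Because $\G$ is finite, this pins $u$ to the span of its $n$ linear and $n(n+1)/2$ quadratic generators, so that
\[u(y)=a'\eps+\eps'B\eps-\sigma_1^2\tr(B\Sigma),\qquad \eps=y-c_1X\beta^{*},\]
for some $a\in\R^n$ and some $B\in\R_{\sym}^{n\times n}$ (symmetry is free since $\eps_i\eps_\ell$ and $\Sigma_{i\ell}$ are symmetric in $i,\ell$). This is already of LPQ form up to a scalar constant, so the whole task reduces to showing that constant vanishes.

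To that end, I would impose unbiasedness on the remaining strata. Fixing a stratum $\F_r(X;c_j\beta^{*},\sigma_i^2\Sigma)$ and picking $\td F$ in its (nonempty) intersection with $\cP(F)$, a direct expansion around $\eta=Y-c_jX\beta^{*}$, for which $\Cov_{\td F}(\eta)=\sigma_i^2\Sigma$, gives
\[0=\E_{\td F}[u(Y)]=(c_j-c_1)\,a'X\beta^{*}+(\sigma_i^2-\sigma_1^2)\tr(B\Sigma)+(c_j-c_1)^2(X\beta^{*})'B(X\beta^{*}).\]
Taking $j=1$ and $i=2$ kills the first and third terms and, since $\sigma_1^2\neq\sigma_2^2$, forces $\tr(B\Sigma)=0$. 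With this in hand the identity collapses to $(c_j-c_1)p+(c_j-c_1)^2q=0$ with $p=a'X\beta^{*}$ and $q=(X\beta^{*})'B(X\beta^{*})$; evaluating at $j=2,3$ yields $p+(c_2-c_1)q=0$ and $p+(c_3-c_1)q=0$, a nonsingular $2\times2$ system (as $c_2\neq c_3$) whose only solution is $p=q=0$.

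Finally, I would expand $u(y)=a'\eps+\eps'B\eps$ (the $\tr(B\Sigma)$ term having dropped) in powers of $y$: the constant coefficient is $-c_1p+c_1^2q=0$ and the linear coefficient is $a-2c_1BX\beta^{*}$, so $u(y)=y'By+(a-2c_1BX\beta^{*})'y\in\estclass_{\LPQ}^{n,1}$, which is the claim. The genuinely new step relative to Lemma~\ref{lem:GLS_master} --- and the one I expect to require the most care --- is the first: Theorem~\ref{thm:master_finite} must be invoked on the covariance-fixed stratum so that $u$ is trapped in the affine-plus-quadratic span rather than merely the linear span (this is also where $r\ge2$ enters, guaranteeing the quadratic conditions are meaningful). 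Everything after that is bookkeeping, and the counting is tight: the two variance levels supply exactly the one equation needed to annihilate $\tr(B\Sigma)$, while the three mean levels (one zero shift plus two distinct nonzero shifts) supply exactly the two equations needed to annihilate the pair $(p,q)$.
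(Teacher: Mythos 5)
Your proposal is correct and follows essentially the same route as the paper's proof: apply Theorem \ref{thm:master_finite} to the single stratum containing $F$ to trap $u$ in the span of the linear and centered quadratic generators, then use the second variance level at fixed mean to kill $\tr(B\Sigma)$ and the two additional mean levels at fixed variance to kill $a'X\beta^{*}$ and $(X\beta^{*})'BX\beta^{*}$ via the same nonsingular $2\times 2$ system. The only difference is cosmetic (matrix notation versus the paper's coordinate sums).
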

\begin{proof}
  Throughout this proof, we will write $=$ and $\subset$ for equality and subsets almost surely under $F$. By Proposition \ref{prop:monotonicity},
  \[\bclass_{r}(\F(X))\subset \bigcap_{\beta\in \{c_j\beta^{*}: j=1,2,3\}}\bigcap_{\Lambda \in \{\sigma_{j}^2 \Sigma: j = 1,2\}}\bclass_{r}(\F_{r}(X; \beta, \Lambda)).\]
  Without loss of generality, assume that $F\in \F_{r}(X; c_1\beta^{*}, \sigma_1^2 \Sigma)$. Taking any $u\in \bclass_{r}(\F(X))$, we must have $u\in \bclass_{r}(\F_{r}(X; c_1\beta^{*}, \sigma_{1}^2 \Sigma))$. By \eqref{eq:GforF} and Theorem \ref{thm:master_finite}, there exists $a_{i}, B_{i\ell}\in \R$ such that
  \[u(y) =\sum_{i=1}^{n}a_{i}(y_{i} - c_1 x_{i}'\beta^{*}) + \sum_{i,\ell=1}^{n}B_{i\ell} \left[(y_{i} - c_1 x_{i}'\beta^{*})(y_{\ell} - c_1 x_{\ell}'\beta^{*}) - \sigma_{1}^2\Sigma_{i\ell}\right].\]
  We can enforce $B_{i\ell} = B_{\ell i}$ by replacing both with $(B_{i\ell} + B_{\ell i})/2$. Since $u\in \bclass_{r}(\F(X))$, for any $\td{F} \in \F_{r}(X; c_1\beta^{*}, \sigma_{2}^2 \Sigma)\cap \cP(F)$,
    \[0 = \E_{\td{F}}[u(Y)] = (\sigma_{2}^2 - \sigma_{1}^2)\sum_{j,\ell=1}^{k} B_{j\ell}\Sigma_{j\ell}\Longrightarrow \sum_{j,\ell=1}^{k} B_{j\ell}\Sigma_{j\ell} = 0.\]
    Thus, $u$ can be simplified into
    \[u(y) = \sum_{i=1}^{n}a_{i}(y_{i} - c_1 x_{i}'\beta^{*}) + \sum_{i,\ell=1}^{n}B_{i\ell} (y_{i} - c_1 x_{i}'\beta^{*})(y_{\ell} - c_1 x_{\ell}'\beta^{*}).\]
    Again, since $u\in \bclass_{r}(\F(X))$, for any $j=2,3$, $u\in \bclass_r(\F_{r}(X; c_j\beta^{*}, \sigma_{1}^2 \Sigma)\cap \cP(F))$. Choose $F_j\in \F_{r}(X; c_j\beta^{*}, \sigma_{1}^2 \Sigma)\cap \cP(F)$, which is non-empty. Then
    \[0 = \E_{F_j}[u(Y)] = (c_{j} - c_{1})\sum_{i=1}^{n}a_{i}x_{i}'\beta^{*} + (c_{j} - c_{1})^2\sum_{i,\ell=1}^{k}B_{i\ell}(x_{i}'\beta^{*})(x_{\ell}'\beta^{*})\]
    \[\Longrightarrow 0 = \sum_{i=1}^{n}a_{i}x_{i}'\beta^{*} + (c_{j} - c_{1})\sum_{i,\ell=1}^{n}B_{i\ell}(x_{i}'\beta^{*})(x_{\ell}'\beta^{*}).\]
    Since $c_{2} - c_{1} \neq c_{3} - c_{1}$, we must have
    \[\sum_{i=1}^{n}a_{i}x_{i}'\beta^{*} = \sum_{i,\ell=1}^{n}B_{i\ell}(x_{i}'\beta^{*})(x_{\ell}'\beta^{*}) = 0.\]
    Then $u(y)$ can be further simplified into
    \begin{align*}
      u(y) = \sum_{i=1}^{n}a_{i}y_{i} + \sum_{i,\ell=1}^{n}B_{i\ell}y_{i}y_{\ell} - c_{1}\sum_{i,\ell=1}^{n}B_{i \ell} \left\{(x_i \beta^{*})y_{\ell} + (x_\ell'\beta^{*})y_{i}\right\}.
    \end{align*}
    Therefore,
    \[\bclass_{r}(\F(X))\subset \estclass_{\LPQ}^{n, 1}.\]
\end{proof}

We can then generalize Koopmann's representation theorem (Theorem \ref{thm:koopmann_unbiased}). Throughout the rest of the paper, we will define the vectorization of any symmetric matrix $\Omega\in \R_{\sym}^{k\times k}$ as
\begin{equation}
  \label{eq:vec*}
  \vec^{*}(\Omega) =
      (\Omega_{11} / \sqrt{2},
    \Omega_{12},
    \ldots,
    \Omega_{1k},
    \Omega_{22} / \sqrt{2},
    \Omega_{23},
    \ldots,
    \Omega_{2k},
    \ldots,
    \Omega_{(k-1)(k-1)} / \sqrt{2},
    \Omega_{(k-1)k},
    \Omega_{kk} / \sqrt{2})'.
\end{equation}
By definition, for any $\Omega_1, \Omega_2\in \R_{\sym}^{k\times k}$, $\tr(\Omega_1\Omega_2) = \vec^{*}(\Omega_1)'\vec^{*}(\Omega_2)$.

\begin{theorem}\label{thm:koopmann_master}
  Suppose $\B\subset \R^{k}$ includes 
  $\{c_{jm}\beta_{m}^{*}: j = 1, 2, 3, m = 1, \ldots, M\}$ such that $c_{1m} < c_{2m} < c_{3m}\in \R$ and 
\[\sp(\{\beta_1^{*}, \ldots, \beta_{M}^{*}\}) = \R^{k}, \,\, \sp(\{\beta_1^{*}\beta_1^{*'}, \ldots, \beta_{M}^{*}\beta_{M}^{*'}\}) = \R_{\sym}^{k\times k}.\]
  For any $r\ge 1$, $\Sigma \succ \zero_{k\times k}$, and $0 < \sigma_{1}^2 < \sigma_{2}^2$, let
  \[\F(X) = \bigcup_{\beta\in \B}\bigcup_{\sigma^2 \in \{\sigma_{1}^2, \sigma_{2}^2\}}\F_{2r}(X; \beta, \sigma^2\Sigma).\]  
  \begin{enumerate}[(a)]
  \item   For any probability measure $F\in \F(X)$ such that $\F_r(X; \beta, \sigma^2\Sigma)\cap \cP(F) \neq \emptyset$ for any $\beta\in \B$ and $\sigma^2\in \{\sigma_1^2, \sigma_2^2\}$,
    \[\class_r(\F(X)\cap \cP(F)) = \estclass_{\koopmann}^{\Sigma}(X) \,\,\text{almost surely under }F.\]
  \item Let $\mu$ be the Lebesgue measure on $\R^{k}$. Then
    \[\class_r(\F(X)\cap \cP_{\cont}) = \estclass_{\koopmann}^{\Sigma}(X) \,\,\text{almost surely under }\mu.\]
  \item Let $\cP_{\disc}$ be defined in \eqref{eq:cP_disc}. Then
    \[\class_r(\F(X)\cap \cP_{\disc}) = \estclass_{\koopmann}^{\Sigma}(X).\]
  \end{enumerate}
\end{theorem}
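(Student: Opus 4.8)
The plan is to mirror the proof of Theorem~\ref{thm:GLS_master}, replacing the purely linear representation by the linear-plus-quadratic one. The easy inclusion $\estclass_{\koopmann}^{\Sigma}(X)\subset\class_r(\F(X))$ holds because the unbiasedness computation is identical to the one establishing $\estclass_{\koopmann}^{\Sigma}(X)\subset\class_1(\F_2^{\Sigma}(X))$ in the excerpt, while every $F\in\F(X)$ lies in some $\F_{2r}(X;\beta,\sigma^2\Sigma)$, so that the finite $2r$-th moment of $Y$ makes the quadratic estimator lie in $\L^{r}(F)$; monotonicity (Proposition~\ref{prop:monotonicity}) then gives $\estclass_{\koopmann}^{\Sigma}(X)\subset\class_r(\F(X)\cap\cP)$ for any dominating class $\cP$. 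Hence in each of (a)--(c) it remains to prove the reverse inclusion, and by \eqref{eq:UFX_U0FX} this reduces to showing that every element of $\bclass_r(\F(X)\cap\cP)$ has the Koopmann form.

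For part (a), I would first apply Theorem~\ref{thm:master_finite} to the single stratum $\F_{2r}(X;c_{11}\beta_1^{*},\sigma_1^2\Sigma)$ --- which is exactly the family $\F_{r,\G}$ generated by the linear and quadratic functions in \eqref{eq:GforF}, equivalently the first part of the argument of Lemma~\ref{lem:koopmann_master} --- to obtain $\bclass_r(\F(X)\cap\cP(F))\subset\estclass_{\LPQ}^{n,1}$ almost surely under $F$. Writing a generic element as $u_j(y)=a_j'y+y'B_j y$ with $B_j$ symmetric (without loss of generality), and using that the nonemptiness hypothesis supplies, for each $\beta\in\B$ and $\sigma^2\in\{\sigma_1^2,\sigma_2^2\}$, a distribution of that stratum lying in $\F(X)\cap\cP(F)$, unbiasedness forces
\[a_j'X\beta+\beta'X'B_jX\beta+\sigma^2\tr(B_j\Sigma)=0\quad\text{for all }\beta\in\B,\ \sigma^2\in\{\sigma_1^2,\sigma_2^2\}.\]
Fixing $\beta$ and differencing over the two variance levels yields $\tr(B_j\Sigma)=0$; then fixing a direction $\beta=c\beta_m^{*}$ and noting that the degree-$\le 2$ polynomial $c\,(a_j'X\beta_m^{*})+c^2\,(\beta_m^{*'}X'B_jX\beta_m^{*})$ vanishes at the three distinct points $c_{1m}<c_{2m}<c_{3m}$ forces both $a_j'X\beta_m^{*}=0$ and $\beta_m^{*'}X'B_jX\beta_m^{*}=0$ for every $m$. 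The two spanning hypotheses now close the argument: $\sp(\{\beta_m^{*}\})=\R^k$ gives $X'a_j=\zero_k$, so the linear part of $u=\hat{\beta}_{\OLS}+(u_1,\ldots,u_k)'$ has coefficient matrix $(X'X)^{-1}X'+A'$ with $A'X=\zero_{k\times k}$ and hence satisfies $A'X=I_k$; and, writing $C_j=X'B_jX$, the identities $\tr(C_j\,\beta_m^{*}\beta_m^{*'})=0$ together with $\sp(\{\beta_m^{*}\beta_m^{*'}\})=\R_{\sym}^{k\times k}$ and non-degeneracy of the trace form force $X'B_jX=\zero_{k\times k}$. These are precisely the constraints defining $\estclass_{\koopmann}^{\Sigma}(X)$.

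Parts (b) and (c) follow the template of Theorem~\ref{thm:GLS_master}. For (b) I take $F=N(X(c_{11}\beta_1^{*}),\sigma_1^2\Sigma)$; since $\Sigma\succ\zero$ this Gaussian has a strictly positive Lebesgue density and finite moments of every order, so $\cP(F)=\cP_{\cont}$, ``almost surely under $F$'' coincides with ``almost surely under $\mu$'', and each stratum $\F_{2r}(X;\beta,\sigma^2\Sigma)$ contains the absolutely continuous member $N(X\beta,\sigma^2\Sigma)$; part (a) then finishes the case. Part (c) is the genuinely delicate one, and I would defer it, as in Theorem~\ref{thm:GLS_master}(c), to the appendix: reducing from a dominated family to the single class $\cP_{\disc}$ requires constructing, for each stratum, finitely supported distributions that are mutually absolutely continuous and rich enough to feed the master theorem, which does not follow verbatim from part (a).

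Beyond this discrete case, the step I expect to require the most care is the moment bookkeeping: one must check that $\F_{2r}(X;\beta,\sigma^2\Sigma)$ is exactly the family $\F_{r,\G}$ for the quadratic generating set $\G$, so that Theorem~\ref{thm:master_finite} delivers the linear-plus-quadratic representation at the intended index $r$ rather than the weaker statement about $\bclass_{2r}$; this is the reason the model class is built from $\F_{2r}$ instead of $\F_r$. The algebraic reduction is otherwise routine, but the place that genuinely exploits the strengthened hypothesis on $\B$ is the quadratic constraint: it is the spanning of the whole space $\R_{\sym}^{k\times k}$ by the rank-one matrices $\beta_m^{*}\beta_m^{*'}$ --- not merely the spanning of $\R^k$ by the $\beta_m^{*}$ --- that pins down $X'B_jX=\zero_{k\times k}$.
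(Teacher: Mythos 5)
Your treatment of the easy inclusion and of parts (a) and (b) is essentially the paper's own proof: the paper likewise isolates the reduction to $\estclass_{\LPQ}^{n,1}$ as Lemma \ref{lem:koopmann_master} (using the three values $c_{11}<c_{21}<c_{31}$ and the two variance levels along the single direction $\beta_1^{*}$), then in part (a) extracts $\tr(B_j\Sigma)=0$ by differencing the two variance levels, kills the degree-$\le 2$ polynomial in $c$ at three distinct points, and uses the two spanning hypotheses to obtain $X'a_j=\zero_{k}$ and $X'B_jX=\zero_{k\times k}$; part (b) uses exactly your Gaussian $N(X(c_{11}\beta_1^{*}),\sigma_1^2\Sigma)$. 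One point you elide: Theorem \ref{thm:master_finite} applied to a single stratum yields a linear combination of the \emph{centered} generators, i.e.\ a linear-plus-quadratic function plus a constant, and that constant must itself be eliminated (the paper does this inside Lemma \ref{lem:koopmann_master}); since your three-point vanishing argument would also annihilate a constant term, this is a presentational rather than a logical defect.

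The genuine gap is part (c), which you defer without supplying an argument; it is a full third of the statement and does not follow from (a) by soft considerations, because no single discrete $F$ dominates all of $\cP_{\disc}$. The paper's appendix resolves it in three steps. First, Lemma \ref{lem:basis} produces a finite set $\mathcal{Y}_0\subset\R^{n}$ with $|\mathcal{Y}_0|=n(n+1)/2$ such that the vectors $(z',\vec^{*}(zz')')'$, $z\in\mathcal{Y}_0$, span $\R^{n(n+1)/2}$. Second, for each stratum $(\beta,\Lambda)$ it builds a finitely supported law with mean $X\beta$ by symmetrizing point masses about $X\beta$, and then corrects the covariance to exactly $\Lambda$ by mixing with point masses at $X\beta\pm w_i$ obtained from an eigen-decomposition of $\Lambda-c\,\Cov_{F_0}[Y]$ for a suitably small $c$. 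Third, for each $y\in\R^{n}$ it assembles a single discrete $F_y\in\F(X)\cap\cP_{\disc}$ whose support contains $y$, $\mathcal{Y}_0$, and all the stratum representatives; part (a) applied with $F=F_y$ gives a Koopmann representative $u_{0,y}$ agreeing with $u$ on the support of $F_y$, and the spanning property of $\mathcal{Y}_0$ forces $u_{0,y}$ not to depend on $y$, whence $u=u_0$ pointwise. Without some version of this construction --- in particular the quadratic analogue of the ``evaluate on a spanning set'' step, which is exactly where Lemma \ref{lem:basis} is needed --- part (c) remains unproved.
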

\begin{proof}
  As shown in the partial proof of Theorem \ref{thm:koopmann_unbiased}, for any $r\ge 1$, by Proposition \ref{prop:monotonicity},
  \[\estclass_{\koopmann}^{\Sigma}(X)\subset \class_{1}(\F_{2}(X))\subset \class_{1}(\F_{2r}(X)).\]
  By Rosenthal's inequality \citep{rosenthal1970subspaces}, for any $u\in \estclass_{\LPQ}^{n, k}$ and $F\in \F_{2r}(X)$,
  \[\E_{F}[\|Y\|^r] < \infty\Longrightarrow u\in \L^{r}(\F_{2r}(X)).\]
  Thus,
  \[\estclass_{\koopmann}^{\Sigma}(X)\subset \class_{1}(\F_{2r}(X))\cap \L^{r}(\F_{2r}(X)) = \class_{r}(\F_{2r}(X))\]
  By Proposition \ref{prop:monotonicity} again,  for any subset $\cP$, 
  \[\estclass_{\koopmann}^{\Sigma}(X)\subset \class_{r}(\F(X)\cap \cP).\]
 Therefore, it remains to prove $\class_{r}(\F(X)\cap \cP)\subset \estclass_{\koopmann}^{\Sigma}(X)$ with $\B = \{c_{jm}\beta_{m}^{*}: j=1,2,3, m=1,\ldots,k\}$ in each case. 
  \begin{enumerate}[(a)]
  \item   Throughout this proof, we will write $=$ and $\subset$ for equality and subsets almost surely under $F$. Without loss of generality, assume that $F\in \F_{r}(X; c_{11}\beta_1^{*}, \sigma_{1}^2\Sigma)$. By Proposition \ref{prop:monotonicity} and Lemma \ref{lem:koopmann_master} with $\beta^{*} = \beta_1^{*}, (c_1, c_2, c_3) = (c_{11}, c_{21}, c_{31})$, we have that
    \[\bclass_r(\F(X)\cap \cP(F))\subset \estclass_{\LPQ}^{n, 1}.\]
    Then, for any $u\in \bclass_r(\F(X)\cap \cP(F))$, there exists $a\in \R^{n}$ and $B\in \R_{\sym}^{n\times n}$ such that
    \[u(y) = a'y + y'B y.\]
    For any $\beta\in \B, \sigma^2\in \{\sigma_1^2, \sigma_2^2\}$, take $\td{F} \in \F_r(X; \beta, \Lambda)\cap \cP(F)$, which is non-empty. Then
    \begin{align*}
      0 & = \E_{\td{F}}[u(Y)] = a'\E_{\td{F}}[Y] + \E_{\td{F}}[\tr(Y'B Y)] \\
        & = a'\E_{\td{F}}[Y] + \beta'X'B X\beta + 2\E_{\td{F}}[(Y - X\beta)']B X\beta + \E_{\td{F}}[(Y - X\beta)'B(Y - X\beta)]\\
        & = a'\E_{\td{F}}[Y] + \beta'X'B X\beta  + \E_{\td{F}}[(Y - X\beta)'B(Y - X\beta)]\\
        & = a'\E_{\td{F}}[Y] + \beta'X'B X\beta  + \E_{\td{F}}[\tr(B(Y - X\beta)(Y - X\beta)')]\\
      & = a'X\beta + \beta'X'B X\beta + \tr(B \Sigma)\sigma^2.
    \end{align*}
    Fixing $\beta$ and setting $\sigma^2$ to be $\sigma_{1}^2$ and $\sigma_{2}^{2}$, we obtain that
    \begin{equation}
      \label{eq:Km_trace}
      \tr(B\Sigma)\sigma_1^2 = \tr(B\Sigma)\sigma_2^2\Longrightarrow \tr(B\Sigma) = 0.
    \end{equation}
    Thus, for any $\beta\in \B$,
    \[a'X\beta + \beta'X'B X\beta = 0.\]
    For any $m = 1, \ldots, M$, setting $\beta = c_{jm}\beta_{m}^{*}$ yields that, for $c\in \{c_{1m}, c_{2m}, c_{3m}\}$,
    \[(a'X\beta_{m}^{*})c + (\beta_{m}^{*'}X'BX \beta_{m}^{*})c^2 = 0.\]
    This quadratic equation has three distinct solutions iff
    \[a'X\beta_{m}^{*} = 0, \quad \beta_{m}^{*'}X'BX \beta_{m}^{*} = 0.\]
    Since $\sp(\{\beta_{1}^{*}, \ldots, \beta_{m}^{*}\}) = \R^{k}$, we must have
    \begin{equation}
      \label{eq:Km_linear}
      X'a = \zero_{k}.
    \end{equation}
    Similarly,
    \[\beta_{m}^{*'}X'BX \beta_{m}^{*} = 0 \Longleftrightarrow \tr\left[(X'BX) (\beta_{m}^{*}\beta_{m}^{*'})\right] = 0\Longleftrightarrow \vec^{*}(X'BX)'\vec^{*}(\beta_{m}^{*}\beta_{m}^{*'}) = 0,\]
Since $\sp(\{\beta_{m}^{*}\beta_{m}^{*'}: m = 1,\ldots, M\}) = \R_{\sym}^{k\times k}$,
\[\sp(\{\vec^{*}(\beta_{m}^{*}\beta_{m}^{*'}): m = 1,\ldots, M\}) = \R^{k(k-1)/2}.\]
Thus,
\begin{equation}
  \label{eq:Km_X'BX}
  \vec^{*}(X'B X) = 0\Longrightarrow X'BX = \zero_{k\times k}.
\end{equation}
Piecing \eqref{eq:Km_trace}, \eqref{eq:Km_linear}, and \eqref{eq:Km_X'BX} together, we conclude that
\[\class_r(\F(X)\cap \cP(F)) = \hat{\beta}_{\OLS} +  \bclass_{r}(\F(X))^{\otimes k} \subset \estclass_{\koopmann}^{\Sigma}(X).\]
  \item Same as the proof of Theorem \ref{thm:GLS_master} (b) except that $F$ is chosen as $N(X(c_{11}\beta_1^{*}), \sigma_1^2 \Sigma)$.
  \item The proof is quite technical; see Appendix \ref{subapp:koopmann_master}.
  \end{enumerate}
\end{proof}

As mentioned earlier, \cite{koopmann1982parameterschatzung} only presented a proof sketch for Theorem \ref{thm:koopmann_unbiased}, which is a special case of Theorem \ref{thm:koopmann_master} with $\F(X) = \F_{2}^\Sigma(X)$. Therefore, we confirm the remarkable result by \cite{koopmann1982parameterschatzung} and hence address the concerns raised by \cite{potscher2022modern}.

When $r\ge 1$ and no constraint is imposed on $\beta$, i.e., $\B = \R^{k}$, Theorem \ref{thm:GLS_master} and Theorem \ref{thm:koopmann_master} show that the constraint on the covariance matrix matters because
\[\estclass_{\GLS}(X) = \class_r(\F_{2r}(X)) = \class_r\lb \bigcup_{\Sigma \succeq \zero_{k\times k}}\F_{2r}^\Sigma(X)\rb, \quad \estclass_{\koopmann}^\Sigma(X) = \class_{r}\lb\F_{2r}^\Sigma(X)\rb,\]
where $\F_{2r}^{\Sigma}(X)$ is defined in \eqref{eq:F_Sigma}. The following result shows that the quadratic estimators would be excluded even if the domain of $\beta$ and $\Sigma$ are both finite. 

\begin{theorem}\label{thm:Sigma_constraint}
  Let $\B\subset \R^{k}$ satisfies the condition in Theorem \ref{thm:koopmann_master} and $\mathcal{M}\subset \R_{\sym}^{n\times n}$ includes at least $n(n-1)$ matrices $\{\sigma_{jm}^2\Sigma_m: j=1,2, m = 1,\ldots, n(n-1)/2\}$ such that $0 < \sigma_{1m}^2 < \sigma_{2m}^2$ and 
  \[\sp\lb \mathcal{M}\rb = \R_{\sym}^{n\times n}.\]
Given $r\ge 1$, Let
\[\F(X) = \bigcup_{\beta\in \B}\bigcup_{\Lambda\in \mathcal{M}}\F_{2r}(X; \beta, \Lambda),\]
Then,
\[\class_{r}\lb \F(X) \cap \cP_{\disc}\rb = \estclass_{\GLS}(X).\]
The same conclusion holds if $\cP_{\disc}$ is replaced by $\cP_{\cont}$ and $=$ denotes the almost sure equality.
\end{theorem}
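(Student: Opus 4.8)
The plan is to reduce, via \eqref{eq:UFX_U0FX}, to characterizing $\bclass_r(\F(X)\cap\cP)$ for $\cP\in\{\cP_{\disc},\cP_{\cont}\}$ and to show it is contained in $\{y\mapsto a'y: X'a=\zero_k\}$; the easy inclusion $\estclass_{\GLS}(X)\subset\class_r(\F(X)\cap\cP)$ follows from Proposition \ref{prop:estclass_GLS} (recall $\estclass_{\GLS}(X)=\estclass_{\LUE}(X)$), Proposition \ref{prop:monotonicity}, and Rosenthal's inequality, exactly as in Theorems \ref{thm:GLS_master} and \ref{thm:koopmann_master}. All the new content sits in the reverse inclusion, which I would split into two moves: first show that any unbiased estimator of zero is a quadratic polynomial, then use the richness of $\mathcal{M}$ to annihilate its quadratic part.

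For the first move I would fix a single base stratum. Writing $\beta_0=c_{11}\beta_1^{*}\in\B$ and $\Lambda_0=\sigma_{11}^2\Sigma_1\in\mathcal{M}$, the stratum $\F_{2r}(X;\beta_0,\Lambda_0)$ is a subset of $\F(X)$ whose defining moment set $\G$ in \eqref{eq:GforF} is finite. Choosing $F$ in this stratum (discrete in the first case, a nondegenerate Gaussian in the continuous case) and applying Theorem \ref{thm:master_finite} together with Proposition \ref{prop:monotonicity} yields, almost surely under $F$, that every $u\in\bclass_r(\F(X)\cap\cP)$ lies in $\sp(\G)$, i.e.\ is of the form $u(y)=c+a'y+y'By$ with $B\in\R_{\sym}^{n\times n}$. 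Going through Theorem \ref{thm:master_finite} directly (rather than Lemma \ref{lem:koopmann_master}) has the advantage of not requiring any element of $\mathcal{M}$ to be positive definite.

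For the second move I would impose unbiasedness against every stratum. For $\beta\in\B$ and $\Lambda\in\mathcal{M}$ admitting a representative in $\cP$, the computation in the proof of Theorem \ref{thm:koopmann_master}(a) gives $0=\E_{\td F}[u(Y)]=c+a'X\beta+\beta'X'BX\beta+\tr(B\Lambda)$. The crucial step is to exploit the pairing in $\mathcal{M}$: fixing $\beta$ and subtracting the equations for $\Lambda=\sigma_{1m}^2\Sigma_m$ and $\Lambda=\sigma_{2m}^2\Sigma_m$ gives $(\sigma_{1m}^2-\sigma_{2m}^2)\tr(B\Sigma_m)=0$, hence $\tr(B\Sigma_m)=0$ for every base direction $m$. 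Since rescaling does not change spans, $\sp(\{\Sigma_m\})=\sp(\mathcal{M})=\R_{\sym}^{n\times n}$, so $\tr(B\Lambda)=0$ for all $\Lambda\in\R_{\sym}^{n\times n}$; nondegeneracy of the bilinear form $\tr(B\Lambda)=\vec^{*}(B)'\vec^{*}(\Lambda)$ from \eqref{eq:vec*} then forces $B=\zero_{n\times n}$. With the quadratic term gone, the residual identity $c+a'X\beta=0$ across the three collinear values $\beta=c_{jm}\beta_m^{*}$ is affine in $c_{jm}$ and thus forces $a'X\beta_m^{*}=0$ and $c=0$; since $\sp(\{\beta_m^{*}\})=\R^k$ we get $X'a=\zero_k$. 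Feeding $u(y)=a'y$ with $X'a=\zero_k$ back through \eqref{eq:UFX_U0FX} yields $\class_r(\F(X)\cap\cP)\subset\estclass_{\LUE}(X)=\estclass_{\GLS}(X)$.

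The main obstacle I expect is not this algebra but the measure-theoretic bookkeeping needed to make both moves rigorous. Each use of Theorem \ref{thm:master_finite} is only valid ``almost surely under $F$,'' so the continuous case must, as in Theorem \ref{thm:koopmann_master}(b), funnel all strata through one dominating Gaussian, which implicitly requires the relevant $\Sigma_m$ to be positive definite so that $\F_{2r}(X;\beta,\Lambda)\cap\cP_{\cont}\neq\emptyset$; the discrete case requires the more delicate non-emptiness and common-dominating-measure arguments deferred to the appendix of Theorem \ref{thm:koopmann_master}(c). The possible singularity of covariance matrices in $\mathcal{M}$ is the one genuinely new wrinkle, whereas the integrability matching between $\F_{2r}$ and $\class_r$ for quadratic $u$ is handled verbatim by the Rosenthal-inequality argument already used in Theorem \ref{thm:koopmann_master}.
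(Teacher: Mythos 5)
Your proof is correct and takes essentially the same route as the paper's: the paper simply applies Theorem \ref{thm:koopmann_master} to each pair $\{\sigma_{1m}^2\Sigma_m,\sigma_{2m}^2\Sigma_m\}$ to conclude $u\in\estclass_{\koopmann}^{\Sigma_m}(X)$ for every $m$, and then the constraints $\tr(B_j\Sigma_m)=0$ over the spanning family force $B_j=\zero_{n\times n}$ exactly as in your second move. The only difference is packaging --- you rederive the linear-plus-quadratic representation from Theorem \ref{thm:master_finite} at a single base stratum instead of citing Theorem \ref{thm:koopmann_master} wholesale (and you rightly flag the positive-definiteness caveat the paper glosses over) --- but the pairing-plus-span mechanism that annihilates the quadratic part is identical.
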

\begin{proof}
  Let $u\in \class_{r}(\F(X)\cap \cP_{\disc})$. By Proposition \ref{prop:monotonicity}, for any $m$, 
  \[u \in \class_{r}(\F(X)\cap \cP_{\disc}) \subset \class_{r}\lb\bigcup_{\beta\in \B}\bigcup_{\sigma^2\in \{\sigma_{1m}^2, \sigma_{2m}^2\}}\F_{2r}(X; \beta, \sigma^2\Sigma_{m})\cap \cP_{\disc}\rb.\]
  Since $\B$ satisfies the condition of Theorem \ref{thm:koopmann_master}, we have that
  \[\class\lb\bigcup_{\beta\in \B}\bigcup_{\sigma^2\in \{\sigma_{1m}^2, \sigma_{2m}^2\}}\F_{2r}(X; \beta, \sigma^2\Sigma_{m})\cap \cP_{\disc}\rb = \estclass_{\koopmann}^{\Sigma_{m}}(X).\]
  Then, there exists $A\in \R^{n\times k}$ and $B_1, \ldots, B_k \in \R_{\sym}^{n\times n}$ such that
  \[u(y) = A'y + (y'B_1 y, \ldots, y'B_k y)\]
  where
  \begin{equation}
    \label{eq:trace_constraint}
    A'X = I_k, \quad \tr(B_j \Sigma_m) = 0, \,\, j = 1, \ldots, k.
  \end{equation}
  For each $j$,
  \[\vec^{*}(B_j)'\vec^{*}(\Sigma_{m}) = 0, \,\, m = 1, \ldots, n(n-1)/2.\]
  Since $\{\Sigma_{m}\}$ spans $\R_{\sym}^{n\times n}$, $\{\vec^{*}(\Sigma_{m})\}$ spans $\R^{n(n-1)/2}$. Therefore,
  \[\vec^{*}(B_j) = 0\Longrightarrow B_j = \zero_{n\times n}.\]
  Therefore,
  \[u(y) = A'y\in \estclass_{\GLS}(X).\]
  This implies $\class_{r}(\F(X)\cap \cP_{\disc}) \subset \estclass_{\GLS}(X)$. By Proposition \ref{prop:monotonicity} and Proposition \ref{prop:estclass_GLS}, we know that $\estclass_{\GLS}(X)\subset \class_{r}(\F_r(X)) \subset \class_{r}(\F(X)\cap \cP_{\disc})$. Thus, the result for $\cP_{\disc}$ is proved. The result for $\cP_{\cont}$ can be proved using the same argument.
\end{proof}


\subsection{A new nontrivial case where OLS/GLS is BUE}
With the powerful representation theorem for $\class_{r}(\F_{2r}^\Sigma(X)\cap \cP)$ discussed in Theorem \ref{thm:koopmann_master}, we can show that the OLS/GLS estimator is BUE in a sufficiently rich class that does not rule out quadratic estimators with respect to a subset of $\F_{2r}^\Sigma(X)$.
\begin{theorem}\label{thm:new_BUE}
  Assume $n \ge \max\{k + 2, 4\}$. Let $\G_3^\Sigma(X)$ includes all distributions of $Y$ such that
  \[\E_{F}[\eps_{i_1}\eps_{i_2}\eps_{i_3}] = 0, \,\, \text{ for any }(i_1, i_2, i_3)\in \{1, \ldots, n\}^3\setminus \{(i, i, i): i \le n\}, \,\, \text{where }\eps = \Sigma^{-1/2}(Y - X\beta).\]
Given any $r\ge 1$ and $\Sigma \succ \zero_{k\times k}$, let $VDV'$ be the eigen-decomposition of $\Sigma$ and
  \[\F(X) = \bigcup_{\Sigma\in \mathcal{C}_{\Sigma}}\F_{2r}^\Sigma(X),\]
  where
  \[\mathcal{C}_{\Sigma} = \{V\td{D}V': \td{D}\in \R^{n\times n} \text{ is diagonal}\}.\]
  Then
  \[\class_{r}(\F(X)\cap \cP)\supsetneq \estclass_{\GLS}(X),\]
  and $\hat{\beta}_{\GLS}^{\Sigma}(Y)$ is BUE in $\class_{r}(\F(X)\cap \cP)$ with respect to $\F_{2r}^{\Sigma}(X)\cap \G_3^{\Sigma}(X)$ for $\cP\in \{\cP_{\cont}, \cP_{\disc}\}$.
\end{theorem}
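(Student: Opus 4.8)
The plan is to split the statement into a \emph{representation} part, which identifies the competing class $\class_r(\F(X)\cap\cP)$ and yields the strict containment $\supsetneq\estclass_{\GLS}(X)$, and a \emph{variance-comparison} part, which establishes the BUE property of Definition \ref{def:BUE}. For the representation, note that $\Sigma\in\mathcal{C}_\Sigma$, so $\F_{2r}^\Sigma(X)\subset\F(X)$; by Proposition \ref{prop:monotonicity} and Theorem \ref{thm:koopmann_master} every $u\in\class_r(\F(X)\cap\cP)$ already lies in $\estclass_{\koopmann}^{\Sigma}(X)$, hence has the form $u(y)=A'y+(y'B_1y,\dots,y'B_ky)'$ with $A'X=I_k$ and $X'B_jX=\zero_{k\times k}$. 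Enlarging the covariance family from $\{\sigma^2\Sigma\}$ to all $\Lambda=V\td{D}V'$ with $\td{D}$ diagonal imposes the further moment conditions $\tr(B_j\Lambda)=0$ for all such $\Lambda$, i.e.\ $\diag(V'B_jV)=0$; the easy half of Theorem \ref{thm:koopmann_unbiased} shows these constraints are also sufficient for unbiasedness under $\F(X)$, so that
\[\class_r(\F(X)\cap\cP)=\{A'y+(y'B_jy)_{j}:A'X=I_k,\ X'B_jX=\zero_{k\times k},\ \diag(V'B_jV)=0\}\]
(almost surely for $\cP=\cP_{\cont}$, exactly for $\cP=\cP_{\disc}$).

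To obtain the strict containment I would exhibit a single nonzero symmetric $B$ satisfying $X'BX=\zero_{k\times k}$ and $\diag(V'BV)=0$ by a dimension count: these are $k(k+1)/2+n$ linear constraints on the $n(n+1)/2$-dimensional space of symmetric matrices, leaving a solution space of dimension at least $n(n-1)/2-k(k+1)/2>0$ once $n\ge k+2$. Placing $y'By$ in one coordinate then gives an unbiased estimator (its $r$-th moment is finite because $F\in\F_{2r}$ controls the $2r$-th moment of $Y$, as in the proof of Theorem \ref{thm:koopmann_master}) that is not linear, so $\class_r(\F(X)\cap\cP)\supsetneq\estclass_{\GLS}(X)$.

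For the BUE part, fix $F\in\F_{2r}^\Sigma(X)\cap\G_3^\Sigma(X)$, so $\Cov_F(Y)=\sigma^2\Sigma$ and $\eps=\Sigma^{-1/2}(Y-X\beta)$ has vanishing mixed third moments (e.g.\ any Gaussian in $\F_{2r}^\Sigma(X)$ qualifies, so the class is nonempty). Write $A_0=\Sigma^{-1}X(X'\Sigma^{-1}X)^{-1}$ so that $\hat{\beta}_{\GLS}^{\Sigma}(Y)=A_0'Y$, and decompose any competitor via \eqref{eq:UFX_U0FX} as $u=A_0'Y+C'Y+q(Y)$ with $C=A-A_0$ (so $C'X=\zero_{k\times k}$) and $q_j(Y)=Y'B_jY$. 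Then
\[\Var_F[u]-\Var_F[\hat{\beta}_{\GLS}^{\Sigma}]=\Var_F[C'Y+q(Y)]+\Cov_F[A_0'Y,\,C'Y+q(Y)]+\Cov_F[A_0'Y,\,C'Y+q(Y)]'.\]
Since the first term is positive semidefinite, it suffices to show the cross-covariance matrix is zero (and this is in fact necessary, since otherwise rescaling $(C,B_j)$ by a small $t$ would break the PSD ordering). The linear--linear piece is Aitken orthogonality: $\Cov_F[A_0'Y,C'Y]=\sigma^2A_0'\Sigma C=\sigma^2(X'\Sigma^{-1}X)^{-1}X'C=0$ because $C'X=\zero_{k\times k}$.

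The remaining linear--quadratic term is where the real work lies. Substituting $Y=X\beta+\Sigma^{1/2}\eps$ and using $X'B_jX=\zero_{k\times k}$ gives $Y'B_jY=2\beta'X'B_j\Sigma^{1/2}\eps+\eps'\td{B}_j\eps$ with $\td{B}_j=\Sigma^{1/2}B_j\Sigma^{1/2}$; the $\eps$-linear contribution drops out because $A_0'\Sigma B_jX=(X'\Sigma^{-1}X)^{-1}X'B_jX=\zero$, leaving
\[\Cov_F[(A_0'Y)_\ell,\,Y'B_jY]=\sum_{a,b,c}(A_0'\Sigma^{1/2})_{\ell a}\,(\td{B}_j)_{bc}\,\E_F[\eps_a\eps_b\eps_c].\]
Under $\G_3^\Sigma(X)$ only the terms with $a=b=c$ survive, so this collapses to $\sum_a(A_0'\Sigma^{1/2})_{\ell a}(\td{B}_j)_{aa}\,\E_F[\eps_a^3]$. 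The main obstacle is to verify that this vanishes for every admissible $B_j$ and every $F$ in the restricted class: since $\E_F[\eps_a^3]$ is unconstrained, one must show that the covariance constraint inherited from $\mathcal{C}_\Sigma$ precisely annihilates the diagonal entries $(\td{B}_j)_{aa}=(\Sigma^{1/2}B_j\Sigma^{1/2})_{aa}$ that the third-moment condition leaves behind. This is the delicate alignment between $\G_3^\Sigma(X)$ and $\mathcal{C}_\Sigma$, and the step I expect to require the most care; once it is confirmed both cross terms are zero, whence $\Var_F[u]\succeq\Var_F[\hat{\beta}_{\GLS}^{\Sigma}]$ for all competitors and all $F\in\F_{2r}^\Sigma(X)\cap\G_3^\Sigma(X)$, which is exactly the asserted BUE property.
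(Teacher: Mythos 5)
Your overall architecture matches the paper's: (i) use the Koopmann-type representation plus the extra trace constraints from $\mathcal{C}_\Sigma$ to pin down the competing class as LPQ estimators with $A'X=I_k$, $X'B_jX=\zero_{k\times k}$ and $\tr(B_j\Lambda)=0$ for all $\Lambda\in\mathcal{C}_\Sigma$; (ii) a dimension count to produce a nonzero admissible $B$ and hence the strict containment (your count of $n+k(k+1)/2$ constraints on an $n(n+1)/2$-dimensional space is in fact cleaner than the paper's); (iii) show the cross-covariance between $\hat{\beta}_{\GLS}^{\Sigma}$ and the competitor's deviation vanishes, with the linear--linear and linear--($\eps$-linear) pieces dying by $C'X=\zero_{k\times k}$ and $X'B_jX=\zero_{k\times k}$ exactly as in the paper.

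However, the proposal has a genuine gap at precisely the step you flag as ``the step I expect to require the most care,'' and you do not close it. You reduce the linear--quadratic cross term to $\sum_a(A_0'\Sigma^{1/2})_{\ell a}(\Sigma^{1/2}B_j\Sigma^{1/2})_{aa}\,\E_F[\eps_a^3]$ and leave open whether $(\Sigma^{1/2}B_j\Sigma^{1/2})_{aa}=0$. With the symmetric square root $\Sigma^{1/2}=VD^{1/2}V'$ this is \emph{false} in general: the constraint from $\mathcal{C}_\Sigma$ is $\diag(V'B_jV)=0$, and for $M=V'B_jV$ with zero diagonal the matrix $V(D^{1/2}MD^{1/2})V'$ need not have zero diagonal (e.g.\ $n=2$, $V$ the $45^\circ$ rotation, $M$ the off-diagonal pattern gives diagonal $(1,-1)$). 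So as written your surviving sum does not vanish and the argument stalls. The fix — which is what the paper's preliminary ``WLOG $\Sigma=I$'' change of variables accomplishes — is to define the residual via the whitening aligned with the eigenbasis of $\Sigma$, i.e.\ $\eps=D^{-1/2}V'(Y-X\beta)$, so that $Y-X\beta=VD^{1/2}\eps$ and the quadratic form becomes $\eps'\td{B}_j\eps$ with $\td{B}_j=D^{1/2}V'B_jVD^{1/2}$, whose diagonal entries are $D_{aa}(V'B_jV)_{aa}=0$ exactly by the trace constraints inherited from $\mathcal{C}_\Sigma$. In those coordinates $\mathcal{C}_\Sigma$ becomes the set of all diagonal matrices, the constraint is literally $\diag(B_j)=0$, and only off-diagonal entries of $B_j$ multiply the third moments, all of which are killed by $\G_3^\Sigma(X)$. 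You should make this coordinate choice explicit; without it the ``delicate alignment'' you correctly identify is not merely delicate but can fail.
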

\begin{proof}
  Without loss of generality, we assume that $\Sigma = I_{k}$; otherwise, we replace $Y$ by $\Sigma^{-1/2}Y = VD^{-1/2}Y$ and $X$ by $\Sigma^{-1/2}X = VD^{-1/2}X$, in which case $\hat{\beta}_{\GLS}^{\Sigma}(Y)$ reduces to $\hat{\beta}_{\OLS}(VD^{-1/2}Y)$ and $\mathcal{C}_{\Sigma}$ reduces to $\mathcal{C}_{I_k}$. For any $u\in \class_r(\F(X))$, using the same argument for \eqref{eq:trace_constraint}, we can show that $u(y) = A'y + (y'B_1 y, \ldots, y'B_k y)$ with
  \begin{equation}
    \label{eq:uy_C_Ik}
    A'X = I_{k}, \,\, X'B_j X = \zero_{k\times k}, \,\, \tr(B_j \Sigma) = 0, j= 1,\ldots, k, \Sigma\in \mathcal{C}_{I_k}.
  \end{equation}
Since $\mathcal{C}_{I_k}$ includes all diagonal matrices, we must have 
 \[B_{j, ii} = 0, \,\, i = 1, \ldots, n, j = 1, \ldots, k;\]
that is, all diagonal elements of $B_{j}$ must be zero. For each $j$, this is equivalent to $n$ linear constraints on $\vec^{*}(B_j)$, i.e.,
\begin{equation}
  \label{eq:B_j_eq1}
  \vec^{*}(e_i e_i')'\vec^{*}(B_j) = 0, \quad i = 1, \ldots, n
\end{equation}
where $e_i$ is the $i$-th canonical basis in $\R^{n}$. The other constraint $X' B_j X = 0$ can be formulated as $k(k-1)/2$ linear constraints on $B_j$, i.e.,
\begin{equation}
  \label{eq:B_j_eq2}
  \td{e}_{p}'X'B_j X \td{e}_{q} = 0\Longleftrightarrow \vec^{*}(Xe_{q}e_{p}'X')'\vec^{*}(B_j) = 0, \quad 1\le p\le q \le n,
\end{equation}
where $\td{e}_{p}$ is the $p$-th canonical basis in $\R^{k}$. Note that \eqref{eq:B_j_eq1} and \eqref{eq:B_j_eq2} give all constraints on $B_j$, which correspond to a homogeneous linear system on $\vec^{*}(B_j)\in \R^{n(n-1)/2}$ with $n + k(k - 1)/2$ equations. Since $k \le n - 2$ and $n\ge 4$,
\[n + \frac{k(k-1)}{2}\le n + \frac{(n - 2)(n - 3)}{2} = \frac{n^2 - 3n + 6}{2} < \frac{n(n - 1)}{2}.\]
Therefore, a non-zero solution always exists. For any solution $B$ and $A\in \R^{n\times k}$ with $A'X = I_{k}$, consider the estimator
\[u_{B}(Y) = A'Y + (Y' B Y, Y' B Y, \ldots, Y' B Y).\]
Then, for any $F\in \F(X)$, then
\begin{align}
  \E_{F}[Y' B Y] &= \beta' X' B X \beta + 2\beta'X' B\E_{F}[Y - X\beta] + \E_{F}[(Y - X\beta)'B(Y - X\beta)]\nonumber\\
                 & = \E_{F}[(Y - X\beta)'B(Y - X\beta)] = \E_{F}[\tr(B(Y - X\beta)(Y - X\beta)')] \nonumber\\
  & = \E_{F}[B\Sigma] = 0,\label{eq:EF_Y'BY}
\end{align}
where the last equality is due to that all diagonal elements of $B$ are zero and $\Sigma\in \mathcal{C}_{I_k}$ is diagonal. Therefore,
\[\E_{F}[u_{B}(Y)] = \E_{F}[A'Y] = A'X\beta = \beta\Longrightarrow u_{B} \in \class_1(\F(X)\cap \cP).\]
Since $\F(X)\subset \F_{2r}(X)$, by Rosenthal's inequality \citep{rosenthal1970subspaces}, $u_{B}\in \L^{r}(\F(X))$. Therefore,
\[\class_r(\F(X)\cap \cP)\supset \estclass_{\GLS}(X)\cup \{u_{B}\}\supsetneq \estclass_{\GLS}(X)\].

\noindent Next, we prove that $\hat{\beta}_{\OLS}(Y)$ is BUE in $\class_r(\F(X))$ with respect to $\F_{2r}^{I_k}(X)\cap \G_{3}^{I_{k}}(X)$. For any $u\in \class_r(\F(X))$, it suffices to show that,
\begin{equation}
  \label{eq:BUE_goal}
  \Cov_{F}(\hat{\beta}_{\OLS}(Y), u(Y) - \hat{\beta}_{\OLS}(Y)) = \zero_{k\times k}, \quad \text{for every }F\in \F_{2r}^{I_{k}}(X)\cap \G_3^{I_{k}}(X).
\end{equation}
By \eqref{eq:uy_C_Ik}, we can write $u(y)$ as
\[u(y) = A'y + (y' B_1 y, \ldots, y' B_k y)',\]
where $A'X = I_{k}$, $X'B_{j}X = \zero_{k\times k}$, and $B_{j, ii} = 0$ for all $i = 1,\ldots, n$. Let $\eps = Y - X\beta$. Recall that we assume that $\Sigma = I_{k}$ without loss of generality. By definition of $\F_{2r}^{I_{k}}(X)$ in \eqref{eq:F_Sigma}, for any $F\in \F(X)$, $\Cov_{F}[\eps] = \sigma^2 I_{k}$ for some $\sigma^2 > 0$, and, analogous to \eqref{eq:EF_Y'BY}, 
\[u(Y) - \E_{F}[u(Y)] = A'\eps + 2(B_1 X\beta, B_2 X\beta, \ldots, B_k X\beta)'\eps + (\eps'B_1 \eps, \ldots, \eps'B_k \eps)'.\]
Then for any $j = 1,\ldots, k$,
\begin{align*}
  &\Cov_{F}(\hat{\beta}_{\OLS}(Y), u(Y) - \hat{\beta}_{\OLS}(Y))\\
  &=   \Cov_{F}((X'X)^{-1}X'\eps, (A' - (X'X)^{-1}X')\eps) + 2\Cov_{F}((X'X)^{-1}X'\eps, (B_1 X\beta, B_2 X\beta, \ldots, B_k X\beta)'\eps)\\
  & \quad + \Cov_{F}((X'X)^{-1}X'\eps, (\eps' B_1 \eps, \ldots, \eps' B_k \eps)').
\end{align*}
Since $A'X = I_{k}$,
\begin{align}
  &\Cov_{F}((X'X)^{-1}X'\eps, (A' - (X'X)^{-1}X')\eps)\nonumber\\
  & = \E_{F}[(X'X)^{-1}X' \eps \eps' (A - X(X'X)^{-1})]\nonumber\\
  & = \sigma^2 (X'X)^{-1}X'(A - X(X'X)^{-1}) = \zero_{k\times k}.\label{eq:cov_term1}
\end{align}
Since $X' B_j X = 0$,
\begin{align}
  & \Cov_{F}((X'X)^{-1}X'\eps, (B_1 X\beta, B_2 X\beta, \ldots, B_k X\beta)'\eps)\nonumber\\
  & = \E_{F}[(X'X)^{-1}X'\eps\eps' (B_1 X\beta, B_2 X\beta, \ldots, B_k X\beta)]\nonumber\\
  & = \sigma^2 \E_{F}[(X'X)^{-1}X'(B_1 X\beta, B_2 X\beta, \ldots, B_k X\beta)]\nonumber\\
  & = \sigma^2 \E_{F}[((X'X)^{-1}(X'B_1 X)\beta, (X'X)^{-1}(X'B_2 X)\beta, \ldots, (X'X)^{-1}(X'B_k X)\beta)] = \zero_{k\times k}.\label{eq:cov_term2}
\end{align}
Consider any $B$ with zero diagonal elements. For any $i = 1, \ldots, n$, let $z_{i}'$ be the $i$-th row of $(X'X)^{-1}X'$. Then
\[\Cov_{F}(z_{i}'\eps, \eps' B \eps) = \sum_{m\le n}\sum_{a\neq b \le n}z_{im}B_{ab}\Cov_{F}(\eps_{m}, \eps_{a}\eps_{b}).\]
Since $F\in \G_3^{I_{k}}(X)$, for any $m, a, b$ with $a\neq b$,
\[\E_{F}[\eps_m\eps_a\eps_b] = 0.\]
Since $\E_{F}[\eps_{m}] = 0$,
\[\Cov_{F}(\eps_{m}, \eps_{a}\eps_{b}) = \E_{F}[\eps_m\eps_a\eps_b] - \E_{F}[\eps_{m}]\E_{F}[\eps_{a}\eps_{b}] = 0.\]
Thus,
\begin{equation}
  \label{eq:cov_term3}
  \Cov_{F}((X'X)^{-1}X'\eps, (\eps' B_1 \eps, \ldots, \eps' B_k \eps)') = \zero_{k\times k}.
\end{equation}
Putting \eqref{eq:cov_term1} - \eqref{eq:cov_term3} together, we prove our goal \eqref{eq:BUE_goal} and thus $\hat{\beta}_{\OLS}(Y)$ is BUE.
\end{proof}

When $\Sigma$ is diagonal, Theorem \ref{thm:new_BUE} is analogous to Theorem 5 of \cite{hansen2022modern} (see also Theorem \ref{thm:summary_GLS} (b) (2) for a restatement in our notation), though we impose restrictions on the third moments instead of independence. It is worth emphasizing that neither result implies the other. 

\bibliography{BUE_GM}
\bibliographystyle{plainnat}

\newpage
\appendix

\section{Proofs}
\subsection{Proof of Theorem \ref{thm:master}}\label{subapp:proof_master}
For notational convenience, write $\F$ for $\F_{r, \G}\cap \cP(F)$. Then $\F_{r, \G}\cap \cPb(F) = \F\cap \cPb(F)$ and
\begin{equation*}
  \bclass_r(\F) = \bclass_1(\F)\cap \L^r(\F), \,\, \bclass_r(\F\cap \cPb(F)) = \bclass_1(\F\cap \cPb(F))\cap \L^r(\F\cap \cPb(F))\supset \bclass_1(\F\cap \cPb(F))\cap \L^r(\F)
\end{equation*}
Thus, it suffices to prove
\begin{equation*}
  \bclass_{1}(\F) \subset \cl_{F}(\sp(\G)) \subset \bclass_{1}(\F\cap \cPb(F)).
\end{equation*}

~\\
\noindent \emph{Proof of $\cl_{F}(\sp(\G))\subset \bclass_{1}(\F\cap \cPb(F))$}: for any $u\in \cl_{F}(\sp(\G))$, $u\in \L^{1}(\F)$ and there exists $u_1, u_2, \ldots$ in $\sp(\G)$ such that $u_{n}\rightarrow u$ in $\L^{1}(F)$ as $n\rightarrow \infty$, i.e.,
\begin{equation}
  \label{eq:un_L1}
  \E_{F}[\|u_{n}(Y) - u(Y)\|] \rightarrow 0.
\end{equation}
Since $u_{n}\in \sp(\G)$, there exists $u_{1n}, \ldots, u_{Mn}\in \G$ and $c_{1n}, \ldots, c_{Mn}\in \R$ for some finite integer $M$ such that
  \[u_{n} = \sum_{i=1}^{M}c_{in}u_{in}.\]
Since $u_{in}\in \G$, for any $\td{F}\in \F(X)$, $\E_{\td{F}}[u_{in}(Y)] = 0$ and, thus,
\[\E_{\td{F}}[u_{n}(Y)] = \sum_{i=1}^{M}c_{in}\E_{\td{F}}[u_{in}(Y)] = 0.\]
When $\td{F}\in \F\cap \cPb(F)$, $d\td{F} / dF$ is bounded almost surely. Then, \eqref{eq:un_L1} implies
\[\E_{\td{F}}[\|u_{n}(Y) - u(Y)\|] = \E_{F}\left[\|u_{n}(Y) - u(Y)\|\frac{d\td{F}}{dF}(Y)\right] \rightarrow 0.\]
As a result,
\[\E_{\td{F}}[u(Y)] = \lim_{n\rightarrow \infty}\E_{\td{F}}[u_{n}(Y)] = 0.\]
Thus, $u\in \bclass_1(\F\cap \cPb(F))$.

~\\
\noindent \emph{Proof of $\bclass_{1}(\F) \subset \cl_{F}(\sp(\G))$}: 
it is well-known that, for any probability measure $F$, $\L^{\infty}(F)$, the space of all almost surely bounded functions under $F$, is the dual space of $\L^{1}(F)$. Denote by $\mathbf{1}$ the constant function that maps any element of $\R^{n}$ to $1$, which is clearly a function in $\L^{\infty}(F)$. Define the orthogonal complement of $\G\cup \{\mathbf{1}\}$ in $\L^{1}(F)$ (in the sense of Proposition \ref{prop:hahn_banach}) as 
  \[\G^{\perp}(F) \triangleq (\G\cup \{\mathbf{1}\})^{\perp} = \left\{h\in \L^{\infty}(F): \E_{F}[h(Y)] = 0, \,\,\E_{F}[h(Y)g(Y)] = 0, \,\, \text{for all }g\in \G \right\}.\]
  For any $h\in \G^{\perp}(F)$, $h$ is almost surely bounded, hence there exists $c > 0$ such that $1 + ch\ge 0$ almost surely under $F$. Let $\td{F}$ be a probability measure with Radon-Nikodym derivative $d\td{F} / dF = 1 + ch$. Then $\int 1d\td{F} = \int (1 + ch)dF = 1$ and thus
  \[\td{F} \in \cP(F).\]
  Moreover, for any $g\in \G$,
  \[\E_{\td{F}}[g(Y)] = \E_{F}[g(Y)(1 + ch(Y))] = \E_{F}[g(Y)] + c\E_{F}[g(Y)h(Y)].\]
  The first term is zero since $F\in \F_{r, \G}$ and the second term is zero since $h\in \G^{\perp}(F)$. This implies $\E_{\td{F}}[g(Y)] = 0$ for any $g\in \G$. Moreover, suppose $|1 + ch|\le B$ for some constant $B$. Then, for any $g\in \G$, 
  \[g\in \L^{r}(F)\Longrightarrow \E_{\td{F}}[\|g(Y)\|^r] = \E_{F}\left[\|g(Y)\|^r \frac{d\td{F}}{dF}(Y)\right]\le B \E_{F}\left[\|g(Y)\|^r\right] < \infty\Longrightarrow g\in \L^{r}(\td{F}).\]
  This entails that
  \[\td{F} \in \F_{r, \G}.\]
  Combing two pieces together, we have proved that
  \[\td{F} \in \F_{r, \G} \cap \cP(F) = \F.\]
  Therefore, for any $u\in \bclass_1(\F)$,
  \[0 = \E_{\td{F}}[u(Y)] = \E_{F}[u(Y)(1 + ch(Y))] = \E_{F}[u(Y)] + c\E_{F}[u(Y)h(Y)] = c\E_{F}[u(Y)h(Y)].\]
  As a result,
  \begin{equation}
    \label{eq:Euh}
    \E_{F}[u(Y)h(Y)] = 0.
  \end{equation}
  Moreover, $u\in \bclass_1(\F)$ implies that
  \[\E_{F}[u(Y)\mathbf{1}] = \E_{F}[u(Y)] = 0.\]  
  Since this holds for any $h\in\G^{\perp}(F)$,  
  \begin{equation}
    \label{eq:u_Gr}
    u \in \lb\G^{\perp}(F)\cup \{\mathbf{1}\}\rb^{\perp}= \lb\G^{\perp}(F)\rb^{\perp} \cap \{\mathbf{1}\}^{\perp},
  \end{equation}
  where the outer $\perp$ denotes the orthogonal complement of the subset $\G^{\perp}(F)\cup \{\mathbf{1}\}$ of $\L^{\infty}(F)$ in $\L^1(F)$ (in the sense of Proposition \ref{prop:hahn_banach}). By Proposition \ref{prop:hahn_banach},
  \[\lb\G^{\perp}(F)\rb^{\perp} = \lb\lb \G\cup \{\mathbf{1}\}\rb^{\perp}\rb^{\perp} = \cl_{F}(\sp(\G\cup \{\mathbf{1}\})).\]
  Since $F\in \F_{r, \G}$, $\G\subset \{\mathbf{1}\}^{\perp}$. Therefore,
  \[\lb\G^{\perp}(F)\rb^{\perp} \cap \{\mathbf{1}\}^{\perp} = \cl_{F}(\sp(\G \cup \{\mathbf{1}\}))\cap \{\mathbf{1}\}^{\perp} = \cl_{F}(\sp(\G)).\]
The result is then proved.

\subsection{Proof of Theorem \ref{thm:GLS_master} (c)}\label{subapp:GLS_master}
    Let $e_{1}, \ldots, e_{n}$ be the canonical basis of $\R^{n}$. For any $\beta\in \B$, let
    \[F_{\beta} = \frac{1}{4n}\sum_{i=1}^{n}(\delta_{e_{i}} + \delta_{-e_{i}} + \delta_{2X\beta - e_{i}} + \delta_{e_{i} - 2X\beta}), \quad \td{F}_{\beta} = \frac{1}{2n}\sum_{i=1}^{n}(\delta_{e_{i}} + \delta_{2X\beta - e_{i}}),\]
    where $\delta_{x}$ denotes the Dirac-measure (point mass) at $x$. Then
    \[\E_{F_{\beta}}[Y] = 0, \quad \E_{\td{F}_{\beta}}[Y] = X\beta.\]
For any $y\in \R^{n}$ and $\beta^{*} = c_{11}\beta_1^{*}$, let
\begin{equation*}
  F_{y} = \frac{1}{4}(\delta_{y} + \delta_{4X\beta^{*} - y}) + \frac{1}{2|\B|}\sum_{\beta\in \B}F_{\beta}.
\end{equation*}
    Then
    \[\E_{F_{y}}[Y] = \frac{1}{4}\cdot 4X\beta^{*} + \frac{1}{2}\cdot 0 = X\beta^{*}.\]
    As a result, $F_{y}\in \F(X)\cap \cP_{\disc}$ and $\td{F}_{\beta}\in \cP(F_{\beta}) \in \cP(F_{y})$ for any $\beta\in \B$. Thus, 
    \[\F_{r}(X; \beta)\cap \cP_{\disc}\supset \F_{r}(X; \beta)\cap \cP(F_{y})\supset \{\td{F}_{\beta}\} \neq\emptyset, \quad \text{for any }\beta\in \B.\]
    By Proposition \ref{prop:monotonicity} and the part (a),
    \[\class_{r}(\F(X)\cap \cP_{\disc})\subset \class_{r}(\F(X)\cap \cP(F_{y})) = \estclass_{\GLS}(X)\,\, \text{almost surely under }F_{y}.\]
    Then, for any $u\in \class_{r}(\F(X)\cap \cP_{\disc})$, there exists $u_{0, y}\in \estclass_{\GLS}(X)$, which might depend on $y$, such that
    \[u - u_{0, y} = 0 \,\,\text{almost surely under }F_{y}.\]
    Now we show that $u_{0, y}$ does not depend on $y$. For any $y_1\neq y_2\in \R^{n}$, since $\{e_{i}: i = 1, \ldots, n\}$ are contained in the support of both $F_{y_1}$ and $F_{y_2}$, 
    \[u(e_i) - u_{0, y_1}(e_i) = u(e_i) - u_{0, y_2}(e_i) = 0 \quad \text{for all }i = 1,\ldots,n.\]
    This implies that
    \[u_{0, y_1}(e_i) - u_{0, y_2}(e_i) = 0\quad \text{for all }i = 1,\ldots,n.\]
    Since $u_{0, y_1}(z) - u_{0, y_2}(z)$ is linear in $z$ and $\sp(\{e_1, \ldots, e_n\}) = \R^{n}$, we must have $u_{0, y_1} = u_{0, y_2}$. Thus, there exists $u_{0} \in \estclass_{\GLS}(X)$ such that, for any $y\in \R^{n}$,
    \[u - u_{0} = 0 \,\,\text{almost surely under }F_{y}.\]
    Since $y$ is in the support of $F_{y}$, this implies
    \[u(y) = u_{0}(y)\,\, \text{for all }y\in \R^{n}.\]
    Thus, $u\in \estclass_{\GLS}(X)$. 

    \subsection{Proof of Theorem \ref{thm:koopmann_master} (c)}\label{subapp:koopmann_master}
    We first prove the following lemma.
    \begin{lemma}\label{lem:basis}
      For any positive integer $n$, there exists a finite subset $\mathcal{Y}_0$ of $\R^{n}$ such that
      \begin{equation}
        \label{eq:span_condition}
        \sp\lb \com{z}{\vec^{*}(zz')}: z\in \mathcal{Y}_0\rb = \R^{(n+1)n / 2}.
      \end{equation}
    \end{lemma}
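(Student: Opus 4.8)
The plan is to view $\com{z}{\vec^{*}(zz')}$ as the image of the polynomial map $\phi(z) = \com{z}{\vec^{*}(zz')}$ and to prove that $\phi(\R^{n})$ spans the full ambient space $\R^{N}$ it lives in; the desired finite set $\mathcal{Y}_0$ is then obtained simply by extracting a basis. Concretely, once $\sp(\phi(\R^{n})) = \R^{N}$ is established, I would choose $z_1, \ldots, z_N \in \R^{n}$ greedily so that $\phi(z_1), \ldots, \phi(z_N)$ are linearly independent (possible precisely because the images span $\R^{N}$), and set $\mathcal{Y}_0 = \{z_1, \ldots, z_N\}$, which is finite and satisfies \eqref{eq:span_condition}.

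The heart of the argument is to show that no nonzero linear functional annihilates $\phi(\R^{n})$. Any functional on $\R^{N}$ evaluated at $\com{z}{\vec^{*}(zz')}$ takes the form $z \mapsto a'z + \vec^{*}(B)'\vec^{*}(zz')$ for some $a \in \R^{n}$ and $B \in \R_{\sym}^{n\times n}$, and by the identity $\tr(\Omega_1\Omega_2) = \vec^{*}(\Omega_1)'\vec^{*}(\Omega_2)$ recorded below \eqref{eq:vec*}, this equals the quadratic form $a'z + z'Bz$. If this functional vanishes on the whole image, then $a'z + z'Bz = 0$ for every $z \in \R^{n}$; since a nonzero polynomial cannot vanish identically on $\R^{n}$ and the monomials $\{z_i\}_{i \le n}$ together with $\{z_i z_\ell\}_{i \le \ell \le n}$ are linearly independent, every coefficient must vanish, i.e. $a = \zero_{n}$ and $B = \zero_{n\times n}$. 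Hence the annihilator of $\sp(\phi(\R^{n}))$ is trivial, which forces $\sp(\phi(\R^{n})) = \R^{N}$ and, combined with the extraction step above, proves the lemma.

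I expect no serious obstacle: the only step requiring genuine care is the monomial-independence claim, which is elementary (for instance by evaluating successive partial derivatives of $a'z + z'Bz$ at the origin), while the passage from a spanning family to a finite spanning subfamily is standard linear algebra. If an explicit witness is preferred to the existence argument, I would instead take $\mathcal{Y}_0 = \{\pm e_i : i \le n\} \cup \{e_i + e_\ell : i < \ell \le n\}$, where $e_i$ is the $i$-th canonical basis vector: forming $\phi(e_i) - \phi(-e_i)$ isolates each linear direction, $\phi(e_i) + \phi(-e_i)$ isolates each diagonal quadratic direction $\vec^{*}(e_i e_i')$, and subtracting these from $\phi(e_i + e_\ell)$ isolates each off-diagonal direction $\vec^{*}(e_i e_\ell' + e_\ell e_i')$; these directions exhaust a basis of $\R^{N}$, verifying \eqref{eq:span_condition} directly.
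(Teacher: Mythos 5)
Your proposal is correct, and your primary argument is a genuinely different route from the paper's. The paper proceeds purely constructively: it takes $\mathcal{Y}_0 = \{e_i : i \le n\} \cup \{e_i + e_j : i \le j \le n\}$, assumes a vanishing linear combination of the vectors $\lb z', \vec^{*}(zz')'\rb'$, and reads off coefficient by coefficient (first the off-diagonal entries of the quadratic block to kill the $b_{ij}$ with $i<j$, then the paired linear/diagonal equations $a_i + 2b_{ii} = a_i + 4b_{ii} = 0$) that all coefficients are zero, so the set is a basis. You instead argue by duality: any linear functional annihilating the whole image $\phi(\R^n)$ is a polynomial $z \mapsto a'z + z'Bz$ vanishing identically, hence $a = \zero_n$ and $B = \zero_{n\times n}$ by linear independence of the monomials, so the image spans the ambient space and a finite spanning subset can be extracted. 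Both arguments are sound; yours is shorter and makes the underlying reason transparent (the map $\phi$ separates linear functionals), at the cost of being non-constructive in its main form, while the paper's yields an explicit basis directly. Your fallback explicit witness $\{\pm e_i\} \cup \{e_i + e_\ell : i < \ell\}$ is essentially the paper's set in disguise (the paper uses $e_i + e_i = 2e_i$ where you use $-e_i$ to separate the linear from the diagonal-quadratic directions), and your verification of it is correct. One shared cosmetic point: the ambient dimension is $n + n(n+1)/2 = n(n+3)/2$ rather than the $(n+1)n/2$ written in the lemma (the paper's own count $|\mathcal{Y}_0| = (n+1)n/2$ has the same slip); your phrasing in terms of ``the ambient space $\R^N$'' sidesteps this, and nothing in either argument is affected.
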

    \begin{proof}
      Let $e_{1}, \ldots, e_{n}$ be the canonical basis of $\R^{n}$ and $z_{ij} = e_{i} + e_{j}$ for any $1\le i\le j\le n$. We will prove that $\mathcal{Y}_0 = \{e_{i}: 1\le i\le n\}\cup \{z_{ij}: 1\le i\le j\le n\}$ satisfies \eqref{eq:span_condition}. Consider any $\{a_{i}: 1\le i\le n\}$ and $\{b_{ij}: 1\le i\le j\le n\}$ such that
      \[\sum_{i\le n}a_{i}\com{e_{i}}{\vec^{*}(e_{i}e_{i}')} + \sum_{i\le j\le n}b_{ij}\com{z_{ij}}{\vec^{*}(z_{ij}z_{ij}')} = \zero_{(n+1)n/2}.\]
      This can be decomposed into two equations:
      \begin{equation}
        \label{eq:span_1}
        \sum_{i\le n}a_{i}e_{i} + \sum_{i\le j\le n}b_{ij}(e_{i} + e_{j}) = \zero_{n},
      \end{equation}
      and
      \begin{equation}
        \label{eq:span_2}
        \sum_{i\le n}a_{i}e_{i}e_{i}' + \sum_{i\le j\le n}b_{ij}(e_{i} + e_{j})(e_{i} + e_{j})' = \zero_{n\times n}.
      \end{equation}
      For any $i < j$, the $(i, j)$-th entry of \eqref{eq:span_2} implies that
      \[b_{ij} = 0.\]
      Then \eqref{eq:span_1} and \eqref{eq:span_2} can be simplified into
      \[\sum_{i\le n}a_{i}e_{i} + 2\sum_{i\le n}b_{ii}e_{i} = \zero_{n}, \quad \sum_{i\le n}a_{i}e_{i}e_{i}' + 4\sum_{i\le n}b_{ii}e_{i}e_{i}' = \zero_{n\times n}.\]
      Thus, for any $i$,
      \[a_{i} + 2b_{ii} = a_{i} + 4b_{ii} = 0\Longrightarrow a_{i} = b_{ii} = 0.\]
      Therefore, the vectors in $\mathcal{Y}_0$ are linearly independent. Since $|\mathcal{Y}_0| = (n+1)n /2$, the lemma is proved.
    \end{proof}
    
    Next we will prove Theorem \ref{thm:koopmann_master}. Let $\mathcal{Y}_0$ be a subset satisfying \eqref{eq:span_condition}. For each $\beta\in \B\cup -\B$, where $-\B = \{-\beta: \beta\in \B\}$, let
    \[F_{0, \beta} = \frac{1}{2|\mathcal{Y}_0|}\sum_{z\in \mathcal{Y}_0}(\delta_{z} + \delta_{2X\beta - z}).\]
    For any $\Lambda\in \{\sigma_1^2 \Sigma, \sigma_2^2\Sigma\}$, since $\Lambda\succ 0$, there exists $c_{\beta, \Lambda} \in (0, 1)$ such that
    \[\Lambda \succ c_{\beta, \Lambda}\Cov_{F_{0, \beta}}[Y].\]
    Applying eigen-decomposition on $\Lambda - c_{\beta, \Lambda}\Cov_{F_{0, \beta}}[Y]$, there exists $w_{1, \beta, \Lambda}, \ldots, w_{n, \beta, \Lambda}\in \R^{n}$ such that
    \[\Lambda = c_{\beta, \Lambda}\Cov_{F_{0, \beta}}[Y] + \frac{1-c_{\beta, \Lambda}}{n}\sum_{i=1}^{n}w_{i, \beta, \Lambda} w_{i, \beta, \Lambda}'.\]
    Let
    \[F_{\beta, \Lambda} = c_{\beta, \Lambda}F_{0, \beta} + (1 - c_{\beta, \Lambda})F_{1, \beta, \Lambda}, \quad \text{where }F_{1, \beta, \Lambda} = \frac{1}{2n}\sum_{i=1}^{n}(\delta_{X\beta + w_{i, \beta, \Lambda}} + \delta_{X\beta - w_{i, \beta, \Lambda}}).\]
    Then
    \[\E_{F_{\beta, \Lambda}}[Y] = c_{\beta, \Lambda}\E_{F_{0, \beta}}[Y] + (1 - c_{\beta, \Lambda})\E_{F_{1, \beta}}[Y] = X\beta,\]
    and
    \[\Cov_{F_{\beta, \Lambda}}[Y] = c_{\beta, \Lambda}\Cov_{F_{0, \beta}}[Y] + \frac{1-c_{\beta, \Lambda}}{2n}\sum_{i=1}^{n}\{w_{i, \beta, \Lambda} w_{i, \beta, \Lambda}' + (-w_{i, \beta, \Lambda})(-w_{i, \beta, \Lambda})'\} = \Lambda.\]
    Thus,
    \begin{equation}
      \label{eq:F_beta_Lambda}
      F_{\beta, \Lambda}\in \F_{2r}(X; \beta, \Lambda)\cap \cP_{\disc}.
    \end{equation}
    Let
    \[F^{*} = \frac{1}{4|\B|}\sum_{\beta\in \B\cup -\B}\sum_{\Lambda\in \{\sigma_1^2\Sigma, \sigma_2^2\Sigma\}}F_{\beta, \Lambda}.\]
    Fix any $y\in \R^{n}$, $\beta^{*} = c_{11}\beta_1^{*}$ and $\Lambda^{*} = \sigma_1^2\Sigma$, let
    \[F_{0, y} = \frac{1}{2}F^{*} + \frac{1}{4}\delta_{y} + \frac{1}{4}\delta_{4X\beta^{*} - y}.\]
    Then
    \[\E_{F_{0, y}}[Y] = \frac{1}{2}\cdot 0 + \frac{1}{4}(y + 4X\beta^{*} - y) = X\beta^{*}.\]
    Using the same argument as above, there exists $c_y>0$ and $w_{1}, \ldots, w_{n}\in \R^{n}$ such that
    \[\Lambda^{*} = c_y\Cov_{F_{0, y}}[Y] + \frac{1-c_y}{n}\sum_{i=1}^{n}w_{i} w_{i}'.\]
    Let
    \[F_{y} = c_yF_{0, y} + (1 - c_y)F_{1, y}, \quad \text{where }F_{1, y} = \frac{1}{2n}\sum_{i=1}^{n}(\delta_{X\beta^{*} + w_{i}} + \delta_{X\beta^{*} - w_{i}}).\]
    Then
    \[\E_{F_y}[Y] = X\beta^{*}, \quad \Cov_{F_y}[Y] = \Lambda^{*}.\]
    As a result,
    \[F_y \in \F(X)\cap \cP_{\disc}.\]
    By \eqref{eq:F_beta_Lambda}, for any $\beta\in \B, \Lambda\in \{\sigma_1^2\Sigma, \sigma_2^2\Sigma\}$,
    \[F_{\beta, \Lambda}\in \cP(F_y)\Longrightarrow \F_{2r}(X; \beta, \Lambda)\cap \cP(F_y) \neq \emptyset.\]
    By Proposition \ref{prop:monotonicity} and the part (a),
    \[\class_{r}(\F(X)\cap \cP_{\disc})\subset \class_{r}(\F(X)\cap \cP(F_{y})) = \estclass_{\koopmann}^{\Sigma}(X)\,\, \text{almost surely under }F_{y}.\]
    Then, for any $u\in \class_{r}(\F(X)\cap \cP_{\disc})$, there exists $u_{0, y} \in \estclass_{\koopmann}^\Sigma(X)$, which might depend on $y$, such that
    \begin{equation}
      \label{eq:u-u0y}
      u - u_{0, y} = 0 \,\,\text{almost surely under }F_{y}.
    \end{equation}
    Following the same argument as in the proof of Theorem \ref{thm:GLS_master} (c), it remains to prove that $u_{0, y}$ does not depend on $y$. Assume that
    \[u_{0, y}(z) = a_y'z + z'B_y z = a_y'z + \vec^{*}(B_y)' \vec^{*}(zz') = (a_y', \vec^{*}(B_y)')\com{z}{\vec^{*}(zz')}.\]
    For any $y_1\neq y_2\in \R^{n}$, since $\mathcal{Y}_0$ is contained in the support of both $F_{y_1}$ and $F_{y_2}$, 
    \[((a_{y_1} - a_{y_2})', (\vec^{*}(B_{y_1}) - \vec^{*}(B_{y_2}))')\com{z}{\vec^{*}(zz')} = 0\quad \text{for all }z\in \mathcal{Y}_0.\]
 By \eqref{eq:span_condition}, $a_{y_1} = a_{y_2}$ and $\vec^{*}(B_{y_1}) = \vec^{*}(B_{y_2})$ and thus $u_{0, y_1} = u_{0, y_2}$. This proves that $u_{0, y}$ does not vary with $y$. Rewrite $u_{0, y}$ as $u_0$. For any $y\in \R^{n}$, \eqref{eq:u-u0y} implies that $u(y) = u_{0}(y)$ since $y$ is a point mass of $F_{y}$. The proof is then completed

 \subsection{Miscellaneous}
\begin{proposition}[\cite{conway2019course}, a special case of Theorem 6.13]\label{prop:hahn_banach}
  For any probability measure $F$, and subset $\G \subset \L^{1}(F)$, let $\G^{\perp}$ denote the orthogonal complement of $\G$ in $\L^{\infty}(F)$ (with respect to the canonical bilinear form that maps $(a, b) \in (\L^{1}(F) \times \L^{\infty}(F))$ to $\E_{F}[a(Y)b(Y)]$):
  \[\G^{\perp} = \left\{g'\in \L^{\infty}(F): \int gg' dF = 0\right\},\]
  and $(\G^{\perp})^{\perp}$ the orthogonal complement of $\G^{\perp}$ in $\L^{1}(F)$ defined similarly as above. Then
  \[(\G^{\perp})^{\perp} = \cl_{F}(\sp(\G)).\]
\end{proposition}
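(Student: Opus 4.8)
The plan is to recognize this as the classical double-annihilator theorem for the dual pairing between $\L^{1}(F)$ and $\L^{\infty}(F)$ and to prove it by two inclusions, the nontrivial one resting on the Hahn--Banach separation theorem. Since $F$ is a probability measure it is in particular $\sigma$-finite, so $\L^{\infty}(F)$ is the continuous dual of $\L^{1}(F)$ under the pairing $(a, g')\mapsto \E_{F}[a(Y)g'(Y)]$; this is precisely what makes ``orthogonal complement in $\L^{\infty}(F)$'' and ``orthogonal complement in $\L^{1}(F)$'' legitimate annihilator operations between a Banach space and its dual. Throughout write $M = \cl_{F}(\sp(\G))$ for the closed linear span, so the goal is $(\G^{\perp})^{\perp} = M$.

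\emph{Easy inclusion $M\subset (\G^{\perp})^{\perp}$.} Every $g\in \G$ lies in $(\G^{\perp})^{\perp}$: by the definition of $\G^{\perp}$ we have $\E_{F}[g(Y)g'(Y)] = 0$ for every $g'\in \G^{\perp}$, which is exactly the membership condition for $(\G^{\perp})^{\perp}$. Moreover $(\G^{\perp})^{\perp}$ is a norm-closed linear subspace of $\L^{1}(F)$: for each fixed $g'\in \L^{\infty}(F)$ the map $a\mapsto \E_{F}[a(Y)g'(Y)]$ is linear and bounded on $\L^{1}(F)$ because $|\E_{F}[a g']|\le \|g'\|_{\infty}\|a\|_{1}$, so $(\G^{\perp})^{\perp}$ is an intersection of kernels of continuous functionals, hence closed. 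A closed subspace containing $\G$ must contain the closed span $M$.

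\emph{Hard inclusion $(\G^{\perp})^{\perp}\subset M$.} I argue by contraposition. Suppose $u\in \L^{1}(F)$ with $u\notin M$. Because $M$ is a closed (convex) subspace of the Banach space $\L^{1}(F)$, the Hahn--Banach separation theorem yields a continuous linear functional $\phi$ on $\L^{1}(F)$ with $\phi|_{M}\equiv 0$ and $\phi(u)\neq 0$. By the duality above, $\phi$ is represented by some $g'\in \L^{\infty}(F)$ via $\phi(a) = \E_{F}[a(Y)g'(Y)]$. Since $\phi$ vanishes on $M\supset \G$, we get $\E_{F}[g(Y)g'(Y)] = 0$ for all $g\in \G$, i.e.\ $g'\in \G^{\perp}$; but $\E_{F}[u(Y)g'(Y)] = \phi(u)\neq 0$ shows $u\notin (\G^{\perp})^{\perp}$. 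Contrapositively $(\G^{\perp})^{\perp}\subset M$, which together with the easy inclusion completes the proof.

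The step I expect to be most delicate is the application of Hahn--Banach: one must invoke the separation theorem in the precise form that a point outside a closed subspace is separated by a bounded functional annihilating the entire subspace (equivalently, that norm-closed subspaces of $\L^{1}(F)$ are weakly closed), and then identify that functional with an element of $\L^{\infty}(F)$ through the $\L^{1}$--$\L^{\infty}$ duality. The finiteness of $F$ is exactly what secures this duality, and no regularity or countability assumption on $\G$ is required.
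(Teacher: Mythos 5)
Your proof is correct and matches the paper's route exactly: the paper establishes this proposition simply by citing Conway's Theorem 6.13, whose standard proof is precisely your two-inclusion argument --- the easy annihilator inclusion (each $(\G^{\perp})^{\perp}$ being an intersection of kernels of bounded functionals, hence a closed subspace containing $\G$), and Hahn--Banach separation of a point from the closed span, with the functional identified as an element of $\L^{\infty}(F)$ via the $\L^{1}$--$\L^{\infty}$ duality, which holds since $F$ is a probability (hence $\sigma$-finite) measure. You also correctly flag the one genuine hypothesis doing work, namely the duality $(\L^{1}(F))^{*} = \L^{\infty}(F)$, so nothing is missing.
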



\begin{proposition}[\cite{conway2019course}, a special case of Proposition 3.3]\label{prop:finite_dim}
  For any probability measure $F$, and finite dimensional subspace $\G$ of $\L^{1}(F)$,
  \[\G = \cl_{F}(\G) \text{ almost surely under }F.\]
\end{proposition}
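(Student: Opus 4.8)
The plan is to invoke the classical fact that every finite-dimensional subspace of a normed vector space is closed, adapted to the $\L^1(F)$ setting where elements are $F$-almost-sure equivalence classes. First I would fix a basis: since $\G$ is $d$-dimensional for some finite integer $d$, choose representatives $g_1, \ldots, g_d \in \L^1(F)$ that are linearly independent as elements of $\L^1(F)$, meaning no nontrivial linear combination $\sum_i c_i g_i$ vanishes $F$-almost surely. Define the linear map $\Phi: \R^d \to \L^1(F)$ by $\Phi(c) = \sum_{i=1}^d c_i g_i$; its range is exactly $\G$, and it is injective by linear independence of the basis.

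The key step is a norm-equivalence lemma: I would show there exists $\kappa > 0$ with $\|\Phi(c)\|_{\L^1(F)} \ge \kappa \|c\|_2$ for all $c \in \R^d$. The function $c \mapsto \|\Phi(c)\|_{\L^1(F)}$ is continuous on $\R^d$ (by the triangle inequality together with $\|g_i\|_{\L^1(F)} < \infty$) and strictly positive on the unit sphere $S^{d-1} = \{c : \|c\|_2 = 1\}$, since $\Phi(c) = 0$ in $\L^1(F)$ for some $c \neq 0$ would contradict linear independence. Because $S^{d-1}$ is compact, this continuous positive function attains a positive minimum $\kappa$, and homogeneity extends the bound to all of $\R^d$.

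With the lemma in hand, closedness follows routinely. Take any $u \in \cl_{F}(\G)$, so there is a sequence $u_n = \Phi(c^{(n)}) \in \G$ with $u_n \to u$ in $\L^1(F)$. Then $\{u_n\}$ is $\L^1(F)$-Cauchy, and the lower bound $\|u_n - u_m\|_{\L^1(F)} = \|\Phi(c^{(n)} - c^{(m)})\|_{\L^1(F)} \ge \kappa \|c^{(n)} - c^{(m)}\|_2$ forces $\{c^{(n)}\}$ to be Cauchy in $\R^d$, hence convergent to some $c^{*}$. Continuity of $\Phi$ gives $u_n \to \Phi(c^{*})$ in $\L^1(F)$, and uniqueness of $\L^1$-limits yields $u = \Phi(c^{*})$ $F$-almost surely, i.e. $u$ agrees almost surely under $F$ with an element of $\G$. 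This proves $\cl_{F}(\G) \subset \G$ almost surely; the reverse inclusion $\G \subset \cl_{F}(\G)$ is immediate, giving the claimed equality.

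The only place requiring genuine care — and the step I would flag as the main point — is the norm-equivalence lemma, specifically the positivity of the minimum over $S^{d-1}$: it is precisely here that finite-dimensionality enters, through compactness of the sphere, and where the distinction between a.e.-equivalence classes and pointwise functions must be respected (linear independence must hold in $\L^1(F)$, not merely pointwise). The remainder is the standard transfer of completeness from $\R^d$ along the homeomorphism $\Phi$.
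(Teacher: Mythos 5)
Your proof is correct: the paper offers no argument of its own here, simply citing Conway's result that finite-dimensional subspaces of normed spaces are closed, and your norm-equivalence-via-compactness argument is precisely the standard proof behind that citation, correctly transplanted to $\L^{1}(F)$. You also handle the one genuinely delicate point in the paper's setting --- that linear independence and the final identification $u = \Phi(c^{*})$ hold at the level of $F$-a.e.\ equivalence classes, matching the paper's definition of set equality ``almost surely under $F$'' --- so nothing is missing.
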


\end{document}